\newtheorem{theorem}{Theorem}
\newtheorem{lemma}[theorem]{Lemma}
\newtheorem{prop}[theorem]{Proposition}
\newtheorem{claim}[theorem]{Claim}
\newtheorem{question}[theorem]{Question}
\DeclareRobustCommand{\rchi}{{\mathpalette\irchi\relax}} 
\newcommand{\irchi}[2]{\raisebox{\depth}{$#1\chi$}} 
\newcommand{\kiu}{\rchi_{\cup}} 
\newcommand\id[1]{id(#1)} 
\newcommand\idC[2]{id_{#1}(#2)} 
\newcommand\logV[1]{\lceil \log_2(|#1|+1) \rceil} 
\newcommand\logN{\lceil \log_2(n+1) \rceil}
\tikzstyle{noeud}=[circle, fill=black, inner sep= 0, minimum size = 4]
\title{A Vizing-like theorem for union vertex-distinguishing edge coloring}
\author[gscop]{Nicolas Bousquet}
\author[liris]{Antoine Dailly\corref{cor}}
\author[liris]{Éric Duchêne}
\author[liris]{Hamamache Kheddouci}
\author[liris]{Aline Parreau}
\date{}
\address[gscop]{G-SCOP (CNRS, Univ. Grenoble-Alpes), Grenoble, France.}
\address[liris]{Univ Lyon, Université Lyon 1, LIRIS UMR CNRS 5205, F-69621, Lyon, France.}
\begin{document}


\begin{frontmatter}
\begin{abstract}
We introduce a variant of the vertex-distinguishing edge coloring problem, where each edge is assigned a subset of colors. The label of a vertex is the union of the sets of colors on edges incident to it. In this paper we investigate the problem of finding a coloring with the minimum number of colors where every vertex receives a distinct label. Finding such a coloring generalizes several other well-known problems of vertex-distinguishing colorings in graphs.

We show that for any graph (without connected component reduced to an edge or a single vertex), the minimum number of colors for which such a coloring exists can only take $3$ possible values depending on the order of the graph. Moreover, we provide the exact value for paths, cycles and complete binary trees.
\end{abstract}

\begin{keyword}
	Graph Coloring;
	Vertex Distinguishing Coloring
\end{keyword}
\end{frontmatter}

\section{Introduction}
Vertex-distinguishing edge colorings of graphs is a wide studied field in chromatic graph theory. Generally speaking, it consists in an edge coloring of a graph (not necessarily proper) that leads to a vertex labeling where every pair of vertices of the graph are distinguished by their labels (also called {\em codes}). For instance, a {\em set irregular edge coloring}~\cite{Harary} is an edge coloring of a graph where each vertex $v$ is assigned the set of colors of the edges incident to $v$. In the literature, many other variants were considered where the codes are defined by the multisets of the colors incident to $v$~\cite{aigner}, the sums\footnote{In that case, colors are integers.}~\cite{chartrand}, the products, or differences~\cite{tahraouiDK12}. The case where the edge coloring is proper was also considered by Burris and Schelp~\cite{Burris}.

More recently, several other variants of this problem were defined, in which the codes are produced from a vertex coloring. In the literature, they are refereed to as {\em identifying colorings}. More precisely, from a vertex coloring of a graph, the code of a vertex is defined as the set of colors of its extended neighborhood. Instances of the identifying coloring problem where any pair of distinct (resp. adjacent) vertices must have different codes were introduced in~\cite{these} (resp.~\cite{Louis}). If few results are known about this problem, its interest is growing as it is a natural generalization of the famous identifying code problem~\cite{codes} where the code of a vertex is given by its presence in the neighbourhood of a set of a vertices.

In the current paper, our objective is to generalize both, the set irregular and identifying coloring problems. For that purpose, we define a new vertex-distinguishing edge coloring problem where every edges is assigned a subset of colors.
Given a simple graph $G$, a $k$-\emph{coloring} of $G$ is a function $f:E(G) \rightarrow 2^{\{1\ldots,k\}}$ where every edge is labeled using a non-empty subset of $\{1,\ldots,k\}$. For any $k$-coloring $f$ of $G$, we define, for every vertex $u$, the set $\idC{f}{u}$ as follows:
$$
 \idC{f}{u} = \bigcup_{v \ s.t. \ uv \in E} f(uv).
$$
If the context is clear, we will simply write $\id{u}$ for $\idC{f}{u}$.
A $k$-coloring $f$ is \emph{union vertex-distinguishing} if, for all distinct $u,v$ in $V(G)$, $\idC{f}{u} \neq \idC{f}{v}$. For a given graph $G$, we denote by $\kiu(G)$ the smallest integer $k$ such that there exists a union vertex-distinguishing coloring of $G$. Figure~\ref{fig:intro} illustrates such a coloring, where small labels correspond to the $f$ function, and the large labels denote the codes.

 \begin{figure}[!h]
 \begin{center}
 \includegraphics[scale=.8]{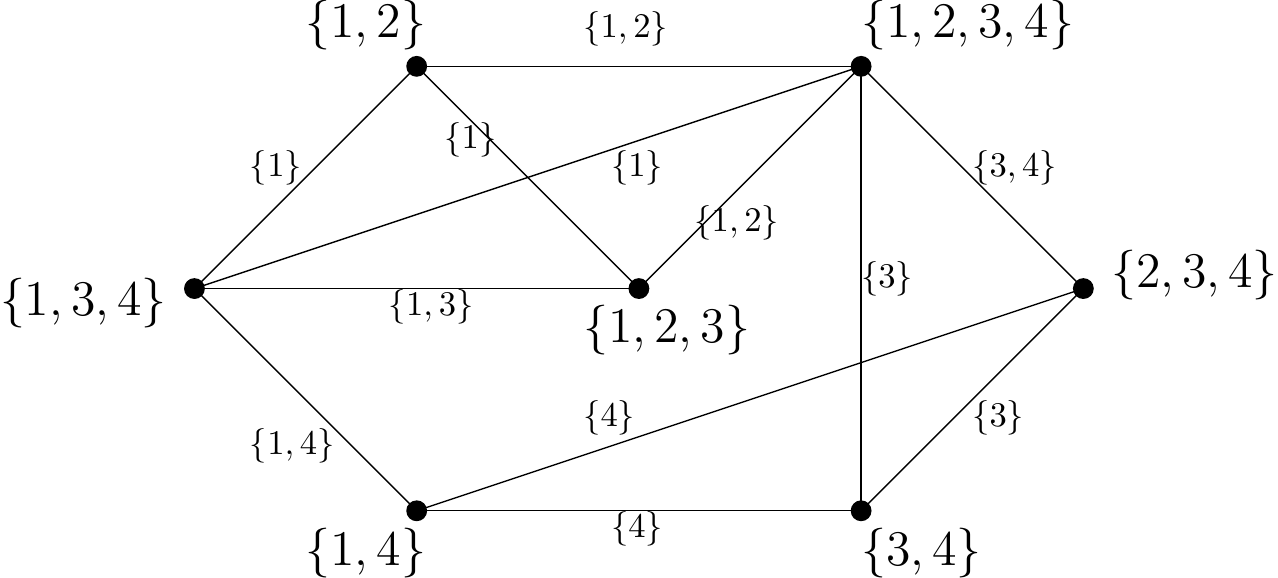}
 \caption{\label{fig:intro} Example of a union vertex-distinguishing coloring }
 \end{center}
 \end{figure}

In most of the existing variants, only one color is allocated to each edge or vertex. Note however that this idea of a coloring function that maps to subsets of integers was also considered by Hegde in 2009. In~\cite{Hegde}, Hedge defined an edge-distinguishing vertex coloring where vertices are colored with sets of positive integers. The codes are defined on edges and equal the symmetric difference of the sets of the vertices incident to them.

As already mentioned, the union vertex-distinguishing problem generalizes existing problems related to vertex-distinguishing colorings. For instance, the set irregular edge coloring problem can be seen as an instance of the union vertex-distinguishing problem where only singletons are allowed on the edges. Moreover, any identifying coloring $f$ of a graph $G$ induces a valid union vertex-distinguishing coloring. Indeed, it suffices to color each edge of $G$ with the set of its incident colors in $f$. Figure~\ref{fig:intro2} illustrates this transformation. As a consequence, each edge of $G$ is colored with a set of size at most $2$.

 \begin{figure}[!h]
 \begin{center}
 \includegraphics[scale=.6]{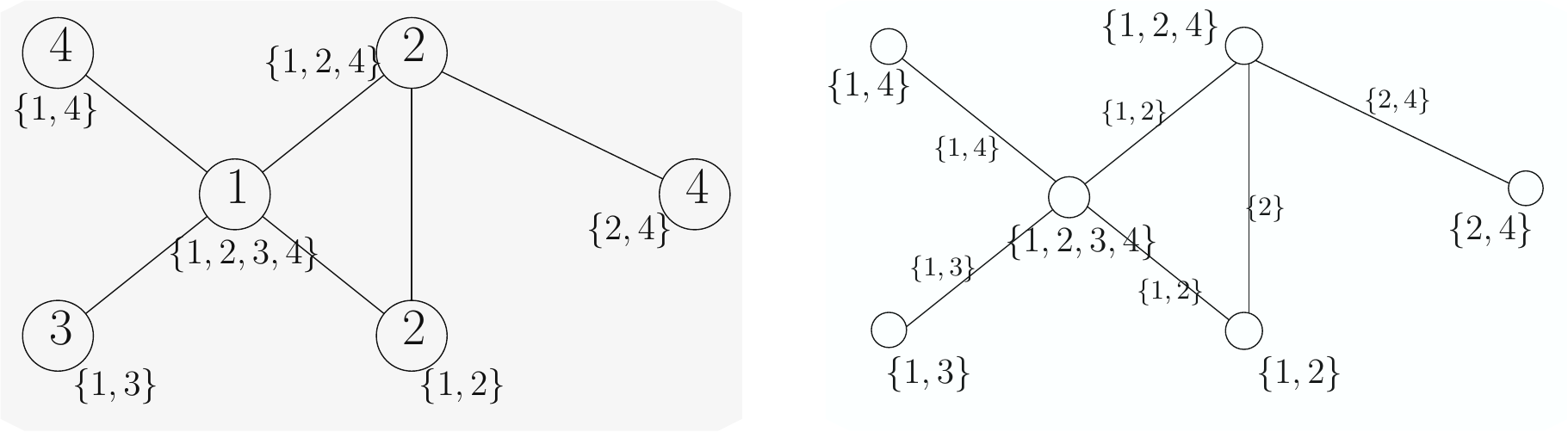}
 \caption{Union vertex-distinguishing coloring (right side) produced from an identifying coloring (left side)\label{fig:intro2}}
 \end{center}
 \end{figure}

Thus if $\rchi_S(G)$ (resp. $\rchi_{id}(G)$) denotes the minimum number of colors of a set irregular edge (resp. identifying) coloring of a graph $G$, the following inequality holds:
\begin{eqnarray}
 \kiu(G)&\leq& \min(\rchi_S(G),\rchi_{id}(G)).   \label{eq:comp}
\end{eqnarray}

If a graph $G$ has a connected component isomorphic to $K_2$ or at least two single vertices, there is no union vertex-distinguishing coloring of $G$. Hence we restrict ourselves in this paper to graphs with connected components of size at least $3$. As for the identifying coloring problem~\cite{these}, a lower bound is straightforwardly available for $\kiu(G)$:

\begin{prop}\label{prop:lowerBound}
Every graph $G=(V,E)$ where no connected component is isomorphic to a single vertex or an edge admits a union vertex-distinguishing coloring and satisfies
$$ \logV{V} \leq \kiu(G). $$
\end{prop}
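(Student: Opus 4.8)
The statement has two parts that I would attack separately: the \emph{existence} of some union vertex-distinguishing coloring, and the \emph{lower bound} $\logV{V}\le\kiu(G)$.

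For existence, the plan is to write down a concrete coloring and check it works. Take $k=|E(G)|$ and assign to the edges the pairwise distinct singletons $\{1\},\dots,\{k\}$ in any order; call this coloring $f$. Then $\id{u}$ is precisely the set of (labels of) the edges incident to $u$. Since every connected component has at least three vertices and is connected, every vertex has degree at least $1$, so each $\id{u}$ is a non-empty subset of $\{1,\dots,k\}$ and $f$ is a legitimate $k$-coloring. To see that $f$ is union vertex-distinguishing I would take $u\neq v$ with $\id{u}=\id{v}$ and derive a contradiction in two cases. If $\deg(u)\ge 2$, then all edges recorded by $\id{u}$ pass through $u$; as two distinct edges of a simple graph meet in at most one vertex, $u$ is the unique common point of these $\ge 2$ edges, and likewise $v$, forcing $u=v$. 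If $\deg(u)=1$ with unique incident edge $uw$, then $\id{u}$ is a singleton and so is $\id{v}$, meaning $v$ is incident to exactly the same single edge $uw$; hence $v\in\{u,w\}$, and $v=w$ would make the component $\{u,w\}$ a copy of $K_2$, contradicting the hypothesis, so $v=u$.

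For the lower bound, the plan is a pure counting argument. Fix any union vertex-distinguishing $k$-coloring $f$ of $G$, with $k=\kiu(G)$. As above, every vertex has degree at least $1$, so $\id{u}\supseteq f(e)\neq\emptyset$ for some incident edge $e$; thus each $\id{u}$ is a non-empty subset of $\{1,\dots,k\}$. By definition, the labels $\id{u}$ for $u\in V$ are pairwise distinct, so $V$ injects into the family of non-empty subsets of $\{1,\dots,k\}$, which has size $2^{k}-1$. Hence $|V|\le 2^{k}-1$, i.e.\ $2^{k}\ge |V|+1$; taking base-$2$ logarithms and using that $k$ is an integer gives $k\ge\logV{V}$, as claimed.

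There is no serious obstacle here: the whole proof is elementary. The only point that genuinely requires care is the treatment of degree-one vertices in the existence construction, since that is exactly where the hypothesis excluding $K_2$ components (and isolated vertices) is used; and in the counting argument, the decisive observation is simply that labels are non-empty, which is what produces the $+1$ inside the logarithm rather than a bare $\lceil\log_2|V|\rceil$.
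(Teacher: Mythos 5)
Your proof is correct and takes essentially the same approach as the paper: for existence, assign pairwise distinct singleton colors to the edges and observe that distinct vertices are incident to distinct edge sets; for the lower bound, count the $2^k-1$ non-empty subsets of $\{1,\dots,k\}$. The only difference is that you spell out the case analysis (degree $\ge 2$ versus degree $1$) justifying that distinct vertices have distinct incident edge sets, a fact the paper simply asserts.
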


\begin{proof}
Since $G$ admits no connected component isomorphic to a single vertex or an edge, there always exists a union vertex-distinguishing coloring of $G$. Indeed, for every pair of vertices $x,y$, the sets of edges incident to $x$ and $y$ are distinct. Thus, if one assigns different colors to each edge, all the pairs of vertices are incident to distinct subsets of colors.
In addition, since $\{1,\ldots ,k\}$ admits $2^{k}-1$ non empty subsets, a union vertex-distinguishing $k$-coloring must satisfy $|V|\leq 2^k-1$. Indeed, otherwise two vertices must receive identical non empty subsets of colors.
\end{proof}

A graph $G$ is {\em optimally colored} if $\kiu(G)=\logV{V}$. We will show that every graph can be colored using $\logV{V}$ or $\logV{V}+1$ or $\logV{V}+2$ colors. This result is somehow surprising since in almost all the known variants of identifying coloring, there exist graphs for which the gap between the lower bound and the minimum number of colors needed to identify all the vertices is arbitrarily large (see e.g.~\cite{these}). 
We can easily prove that there exist classes of graphs on which the optimal value can be obtained in any case (e.g. trees, paths, large enough cycles, complete binary trees...). In the case of complete graphs, the lower bound provided by Proposition~\ref{prop:lowerBound} is not always tight. Indeed, consider the complete graph $K_n$ of order $n=2^k-1$ for any integer $k>1$. If $\kiu(K_n)=\logV{V}$, then there exist two distinct vertices $u$ and $v$ such that $\id{u}=\{1\}$ and $\id{v}=\{2\}$. The only way to produce these codes is to color all the edges incident to $u$ with the singleton $\{1\}$, and all the edges incident to $v$ with $\{2\}$. Hence the edge $(u,v)$ would receive two different colors, a contradiction.
Harary proved in~\cite{Harary} that $\chi_S(K_n)=\lceil \log_2(n) \rceil +1$. Putting together this result with (\ref{eq:comp}), it ensures that we have $\kiu(K_n)\in\{\lceil \log_2(n) \rceil,\lceil \log_2(n) \rceil+1\}$.
We were not able to find a graph for which $\logV{V}+2$ colors are needed. Determining if there exist, or not, graphs that can reach this value remains open.
Vizing's theorem ensures that the minimum number of colors of an edge coloring of any graph of maximum degree $\Delta$ is either $\Delta$ (which is the immediate lower bound) or $\Delta+1$. Our result is similar since all the union vertex-distinguishing number is either the optimal value or the optimal value plus one or the optimal value plus two.

The paper is organized as follows. In Section~\ref{sec:general}, we prove our main result that ensures that for every graph $G$ that admits a union vertex-distinguishing coloring, the value of $\kiu(G)$ is between $\logV{V}$ and $\logV{V}+2$. In Section~\ref{sec:particularclasses}, we prove that the lower bound is reached for paths, complete binary trees, and cycles of length $n$ with $n\neq 7$.

\section{An almost tight upper bound on $\kiu(G)$}\label{sec:general}

\begin{theorem}
\label{theorem:superiorBound}
Every graph $G=(V,E)$ where no connected component is isomorphic to a single vertex or an edge satisfies
$$ \kiu(G) \leq \logV{V}+2. $$
\end{theorem}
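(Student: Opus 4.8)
The engine of the argument is a simple \emph{realizability criterion}: if $c\colon V\to 2^{\{1,\dots,k\}}\setminus\{\emptyset\}$ is a prospective list of vertex codes such that (i) the $c(v)$ are pairwise distinct, (ii) $c(u)\cap c(v)\neq\emptyset$ for every edge $uv\in E$, and (iii) $c(v)\subseteq\bigcup_{u\,:\,uv\in E}c(u)$ for every $v\in V$, then the coloring $f(uv):=c(u)\cap c(v)$ is union vertex-distinguishing with $\id{v}=c(v)$ for all $v$: by (ii) the sets $f(uv)$ are non-empty, and since $f(uv)\subseteq c(v)$ always, distributivity and (iii) force $\id{v}=\bigcup_{uv\in E}f(uv)=c(v)$. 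So it is enough to produce such a $c$ using $k=\logV{V}+2$ colours. I would reserve colour $1$ as a \emph{universal colour} (require $1\in c(v)$ for all $v$), which makes (ii) automatic; writing $c(v)=\{1\}\cup S_v$ the task becomes: find pairwise distinct sets $S_v\subseteq\{2,\dots,k\}$ with $S_v\subseteq\bigcup_{u\sim v}S_u$ for all $v$ --- equivalently, $m:=k-1=\logV{V}+1$ subsets $V_2,\dots,V_k$ of $V$ that separate all vertices and such that each induced subgraph $G[V_i]$ has no isolated vertex.

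Next I would reduce to connected graphs. Since conditions (i) and (iii) are local to each component, it suffices to realize one family of sets per component $G_j$ inside pairwise \emph{disjoint subcubes} of $2^{\{2,\dots,k\}}$: allot to $G_j$ a fixed membership pattern on $m-d_j$ of the colours together with $d_j:=\lceil\log_2(n_j+1)\rceil$ free colours, where $n_j=|V(G_j)|$. Disjointness of these subcubes is exactly a Kraft/prefix-code condition, and it holds because $\sum_j 2^{-(m-d_j)}=\sum_j 2^{d_j}/2^{m}\le\sum_j 2n_j/2^{m}=2|V|/2^{m}<1$, using $2^{d_j}\le 2n_j$ and $2^{m}\ge 2(|V|+1)$. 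A colour frozen to present over an entire component creates no isolated vertex (components are connected with at least $3$ vertices), and cross-component distinctness is free; so one is left to treat a \emph{connected} graph with only $d=\logV{V}$ free colours (this already gives $\kiu(G)\le\logV{V}+1$ for connected $G$; the second extra colour is spent gluing components). Finally, replacing a connected graph by a spanning tree $T$ loses nothing: a solution for $T$ is a solution for $G$, since $\{u\sim_T v\}\subseteq\{u\sim_G v\}$ and a colour class with no isolated vertex in $T$ still has none in $G$.

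The crux is thus the \emph{tree case}: given a tree on $n$ vertices and a palette of $\logV{V}$ colours, find pairwise distinct subsets $S_v$ of the palette with $S_v\subseteq\bigcup_{u\sim v}S_u$. Root the tree. In a branching region a \emph{nesting} assignment works --- give each child a set contained in its parent's set, so each colour of a child is witnessed at its parent, and let a vertex's set be the union of the children's sets when its own colours must be witnessed from below; this readily handles star-like and bushy parts. The real difficulty, and where I expect the main obstacle, is a long path of \emph{degree-$2$} vertices: naive nesting squeezes such a vertex onto a neighbour's label, and the support of a single bit-position read along a path is an independent set, so nesting alone both collapses codes and manufactures isolated vertices. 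The remedy is to process a maximal degree-$2$ path by a \emph{pairing} scheme --- pair consecutive vertices, encode the pair index in binary on a shared block of colours (each such colour class then being a disjoint union of edges, hence with no isolated vertex), and separate the two vertices of a pair by one further colour placed on \emph{alternating} ends of successive pairs so that its support, too, is a disjoint union of edges --- together with some care at the two ends of the path. Stitching the nesting and pairing mechanisms together along a decomposition of the tree into branching vertices and maximal degree-$2$ paths, while keeping the total number of colours equal to $\logV{V}$ and all codes distinct, is where the bulk of the work lies.
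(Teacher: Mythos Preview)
Your approach is genuinely different from the paper's. The paper proves the bound by (a) showing that passing from an edge-subgraph $H$ to $G$ costs at most one colour, (b) extracting from any admissible $G$ a spanning edge-subgraph $H$ that is a disjoint union of \emph{$1$-stars} (once-subdivided stars), (c) proving each $1$-star can be optimally coloured, and (d) showing that a disjoint union of optimally colourable graphs can be coloured with at most the optimal number of colours plus one. Chaining these: $1$-stars optimal $\Rightarrow$ their union $\le$ optimal$+1$ $\Rightarrow$ $G$ $\le$ optimal$+2$. Your realizability criterion is correct (indeed it is an equivalence: $\kiu(G)\le k$ iff such a $c$ exists), the universal-colour trick and the Kraft/prefix allocation across components are sound, and the passage to a spanning tree is valid.

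The genuine gap is the ``tree case''. What you need is: for every tree $T$ on $n$ vertices and $d=\lceil\log_2(n+1)\rceil$, there exist subsets $V_2,\dots,V_{d+1}\subseteq V(T)$ that separate all vertices and such that each $T[V_i]$ has no isolated vertex. You do not prove this; the nesting/pairing outline leaves the hard parts---maintaining global distinctness of the $S_v$ while stitching path segments to branching vertices, handling the endpoints of maximal degree-$2$ paths, and doing all of it within exactly $d$ colours---completely unaddressed. And this is not a routine detail: via your own realizability criterion, a solution would yield $\kiu(T)\le\lceil\log_2(n+1)\rceil+1$ for \emph{every} tree (with the extra feature that all codes share a colour), and hence $\kiu(G)\le\lceil\log_2(|V|+1)\rceil+1$ for every \emph{connected} $G$. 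The paper does not establish either of these; in fact its concluding open questions list precisely ``any tree can be optimally coloured'' as a route to the $+1$ bound. So the step you defer is essentially the open problem the paper is built to circumvent: the $1$-star decomposition is exactly what lets the authors avoid having to colour arbitrary trees.
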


This section is devoted to prove Theorem~\ref{theorem:superiorBound}.
Let us first describe in a few words the structure of the proof. A graph $H$ is an \emph{edge-subgraph} of $G$ if $V(H)=V(G)$ and $E(H) \subseteq E(G)$. First we show in Lemma~\ref{lem:subgraphImpliesGraph} that, for every edge-subgraph $H$ of $G$, we have $\kiu(G) \leq \kiu(H)+1$.
Thus, if we can find an edge subgraph $H$ of $G$ such that $\kiu(H) \leq \logV{V}+1$, the conclusion immediately holds. Lemmas~\ref{lem:treeDecomposition} and~\ref{lem:stars} consist in extracting such a subgraph and proving that it admits a union vertex-distinguishing coloring with $\logV{V}+1$ colors.

\begin{lemma}
\label{lem:subgraphImpliesGraph}
 Let $G=(V,E)$ be a graph. For any edge-subgraph $H$ of $G$, we have $\kiu(G) \leq \kiu(H)+1$.
\end{lemma}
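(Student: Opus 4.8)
The plan is to start from an optimal union vertex-distinguishing coloring $g$ of the edge-subgraph $H$, using $\kiu(H)$ colors, and extend it to $G$ at the cost of at most one extra color. Let $k=\kiu(H)$ and introduce a fresh color $k+1$ not used by $g$. Every edge of $G$ is either an edge of $H$ or an edge of $E(G)\setminus E(H)$; I would keep $g$ unchanged on $E(H)$ and assign the singleton $\{k+1\}$ to every edge of $E(G)\setminus E(H)$. The resulting function $f$ is a $(k+1)$-coloring of $G$ (every edge gets a non-empty subset), so it only remains to check that $f$ is union vertex-distinguishing.

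The key observation is that for every vertex $u$ one has $\idC{f}{u} = \idC{g}{u}$ if $u$ is incident to no edge of $E(G)\setminus E(H)$, and $\idC{f}{u} = \idC{g}{u} \cup \{k+1\}$ otherwise (here $\idC{g}{u}\subseteq\{1,\dots,k\}$, with the convention that $\idC{g}{u}=\emptyset$ when $u$ is isolated in $H$). In particular $k+1 \in \idC{f}{u}$ precisely when $u$ has an incident edge in $E(G)\setminus E(H)$. Now take two distinct vertices $u,v$. If both lie in the same one of these two classes — i.e. $k+1$ belongs to both codes or to neither — then $\idC{f}{u}\neq\idC{f}{v}$ reduces to $\idC{g}{u}\neq\idC{g}{v}$, which holds because $g$ is union vertex-distinguishing on $H$ (note $V(H)=V(G)$, so $g$ distinguishes all pairs of vertices). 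If $u$ and $v$ lie in different classes, then exactly one of $\idC{f}{u},\idC{f}{v}$ contains $k+1$, so the codes differ immediately.

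I do not expect any real obstacle here; the only point requiring a word of care is the edge case where $H$ has isolated vertices (which is allowed, since an edge-subgraph need not be spanning in the connected sense), so that $\idC{g}{u}$ may be empty — but the argument above already covers this, since an isolated vertex of $H$ is necessarily incident in $G$ to at least one edge of $E(G)\setminus E(H)$ (as $G$ has no isolated vertex), hence falls in the class whose code contains $k+1$, and two such vertices are still separated by their $g$-parts. This establishes $\kiu(G)\le k+1 = \kiu(H)+1$.
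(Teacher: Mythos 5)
Your proposal is correct and follows essentially the same route as the paper: keep the optimal coloring of $H$ on $E(H)$, put the singleton $\{k+1\}$ on every edge of $E(G)\setminus E(H)$, and observe that the restriction of each code to $\{1,\ldots,k\}$ is exactly the $H$-code, so all codes stay distinct. Your extra remark about possibly empty codes at vertices isolated in $H$ is a harmless refinement the paper does not spell out.
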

\begin{proof}
Let $\alpha$ be a $k$-coloring of $H$ that is union vertex-distinguishing. We denote by $\{ 1,\ldots, k \}$ the colors of $\alpha$ and let $k+1$ be a new color. Let us prove that $G$ admits a union vertex-distinguishing coloring with $k+1$ colors. Consider the coloring $\beta$ of $G$ where $\beta(e)=\alpha(e)$ if $e \in E(H)$ and $\beta(e)=\{ k+1 \}$ otherwise.
For every vertex $u$ in $V(G)$, we have $\idC{\beta}{u}= \idC{\alpha}{u}$ or $\idC{\alpha}{u} \cup \{k+1 \}$. Indeed, if $u$ is not an endpoint of an edge of $E(G) \setminus E(H)$, then $\idC{\beta}{u}= \idC{\alpha}{u}$. Otherwise, color $k+1$ also appears in  an edge incident to $u$ and then $\idC{\beta}{u} = \idC{\alpha}{u} \cup \{k+1 \}$.

Thus, for every pair $u,v$ of vertices, we have $\idC{\beta}{u} \cap \{ 1,\ldots,k \} = \idC{\alpha}{u} \neq \idC{\alpha}{v} = \idC{\beta}{v} \cap \{ 1,\ldots,k\}$. Thus $\idC{\beta}{u} \neq \idC{\beta}{v}$ which completes the proof.
\end{proof}

The \emph{star} $S_{1,k}$ is a graph on $k+1$ vertices with $k$ vertices (called \emph{leaves}) of degree one all connected to the $(k+1)$-th vertex (called the \emph{center}). A star is \emph{non-trivial} if it admits at least two leaves.
A graph $H$ is a \emph{$1$-subdivision} of $G$ if $H$ can be obtained from $G$ by subdividing each edge at most one time.
A \emph{$1$-star} is a $1$-subdivision of a non-trivial star.

\begin{lemma}
\label{lem:treeDecomposition}
Any graph with no connected component isomorphic to a single vertex or an edge admits a disjoint union of $1$-stars as an edge-subgraph.
\end{lemma}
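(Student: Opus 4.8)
The plan is to reduce the statement to trees and then argue by strong induction on the number of vertices. It suffices to show that every tree $T$ on $n\ge 3$ vertices admits a partition of $V(T)$ into subtrees each of which is a $1$-star: given this, take a spanning tree of each connected component of $G$ (each such tree has at least $3$ vertices, since every component of $G$ does) and partition each of them into $1$-star subtrees; the union of all these subtrees over all components is then an edge-subgraph of $G$ that is a disjoint union of $1$-stars.

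So let $T$ be a tree on $n\ge 3$ vertices. For the base cases $n\in\{3,4,5\}$ one checks directly that $T$ is itself a $1$-star (for example, when $n=5$ it is $S_{1,4}$, the path on five vertices, or the tree obtained from $S_{1,3}$ by subdividing one edge). For $n\ge 6$, root $T$ at a leaf $r$ and let $u_h$ be a vertex of maximum depth $h$, along the root path $r=u_0,u_1,\dots,u_h$; note that every child of $u_{h-1}$, and every child of a child of $u_{h-2}$, is then a leaf. If $h=2$, then $T$ is a star with at least two leaves, hence a $1$-star, and we are done. If $h\ge 3$, I would distinguish cases on the degrees of $u_{h-1}$ and $u_{h-2}$ in $T$. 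If $u_{h-1}$ has degree at least $3$, then $u_{h-1}$ together with its leaf-children forms a star with at least two leaves, a $1$-star $S$. If $u_{h-1}$ has degree $2$ and $u_{h-2}$ has degree $2$, then $\{u_{h-2},u_{h-1},u_h\}$ induces a path attached to the rest of $T$ only at $u_{h-2}$; removing it leaves a connected tree on $n-3\ge 3$ vertices, on which we recurse. Finally, if $u_{h-1}$ has degree $2$ and $u_{h-2}$ has degree at least $3$, then either some child of $u_{h-2}$ has at least two leaf-children, in which case we peel off the star formed by that child together with its leaves (a $1$-star) and recurse on the remainder, which is connected with at least $4$ vertices; or else every child of $u_{h-2}$ is a leaf or has exactly one leaf-child, and then the subtree rooted at $u_{h-2}$ is a $1$-star $S$ with center $u_{h-2}$, non-trivial because $u_{h-2}$ has at least two children.

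In every case where we have isolated a $1$-star $S$ inside $T$, we recurse on $T-V(S)$ provided it has at least $3$ vertices; this is the generic situation and it closes the induction, since then $3\le |V(T)\setminus V(S)|<n$. The delicate point --- and the step I expect to require the most care --- is the boundary situation where $V(T)\setminus V(S)$ consists of only one or two vertices: there one has to go through the (few) possible configurations of these leftover vertices and check by hand that in each of them $T$ was in fact already a $1$-star, so that no recursion is needed. Handling this dichotomy correctly in every subcase --- whether $S$ is the leaf-star at $u_{h-1}$, the pendant path, or the subtree rooted at $u_{h-2}$ --- is the main piece of bookkeeping; everything else is routine.
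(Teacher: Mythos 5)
Your proof is correct, but it takes a genuinely different route from the paper's. The paper argues by minimal counterexample directly on the graph $G$: it shows that in a lexicographically minimal counterexample every edge deletion creates a component of size at most $2$, and then analyzes an edge incident to a vertex of maximum degree to exhibit a $1$-star component (falling back on paths and cycles when the maximum degree is $2$). You instead reduce to a spanning forest and prove the stronger, more structured statement that every tree on at least $3$ vertices can be partitioned into vertex-disjoint $1$-star subtrees, by repeatedly peeling off a deepest leaf-star. Your reduction is legitimate (a spanning forest is an edge-subgraph, each component of $G$ has at least $3$ vertices, and an edge-subgraph of an edge-subgraph is an edge-subgraph), your three cases on the degrees of $u_{h-1}$ and $u_{h-2}$ are exhaustive, and the base cases are right: every tree on at most $5$ vertices is a spider with at least two legs, each of length at most $2$, hence a $1$-star. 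The ``delicate point'' you flag does work out exactly as you predict: the leftover set of one or two vertices can only be an initial segment $\{u_0\}$ or $\{u_0,u_1\}$ of the root path, and attaching it back turns $T$ into a spider centered at $u_{h-1}$ (case of large degree at $u_{h-1}$) or at $u_{h-2}$ (the remaining case) with all legs of length at most $2$, hence a $1$-star; the pendant-path case never hits the boundary since it removes exactly $3$ of the $n\geq 6$ vertices, and the case where a child of $u_{h-2}$ carries two leaf-children always leaves at least the $h+1\geq 4$ vertices of the root path. What each approach buys: the paper's extremal argument is shorter but non-constructive and somewhat terse; yours is longer in bookkeeping but yields an explicit greedy decomposition algorithm and isolates the purely tree-theoretic content, which fits well with the authors' own remark that proving optimal colorability of forests of subdivided stars (or of trees) would improve Theorem~\ref{theorem:superiorBound}.
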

\begin{proof}
 Assume by contradiction that there exists a graph $G$ with no connected component isomorphic to a single vertex or an edge that does not admit a $1$-star as an edge-subgraph. Amongst all these counter-examples, choose $G$ that lexicographically minimizes its number of vertices and then its number of edges. Note that the minimality of $G$ ensures that $G$ is connected and that $G$ has at least $3$ vertices.

 For every edge $e$, if $G \setminus e$ admits as an edge-subgraph a disjoint union of $1$-stars as an edge-subgraph, then $G$ also does since $G \setminus e$ is an edge subgraph of $G$. So $G \setminus e$ has no decomposition into disjoint $1$-stars. Thus, by minimality of $G$, for every edge $e$, the graph $G \setminus e$ has at least one connected component that is reduced to a single vertex or an edge.

 Let $e=(u,v)$ be an edge where $v$ is a vertex of maximum degree. Note that $d(v) \geq 2$.
 First assume that $d(v) \geq 3$. In the graph $G'=G \setminus e$, the component of $u$ or the component of $v$ have size at most $2$. Since $d(v) \geq 2$ in $G'$, the component of $u$ in $G'$ is a single vertex or an edge.
 For every neighbor $w$ of $v$ the component of $w$ in $G \setminus e'$, where $e'=(v,w)$, is either is single vertex or an edge. Indeed the component of $v$ contains both $u$ and $v$ and another neighbor of $v$. Thus, the component of $v$ in $G \setminus e$ is a $1$-star with center $v$. And hence, so is the component of $v$ in $G$.

 Assume now that $d(v)=2$. By connectivity of $G$, $G$ is a path or a cycle with at least three vertices. Then $G$ admits as an edge-subgraph a disjoint union of $P_3$, $P_4$ and $P_5$ which are all $1$-stars, a contradiction.

 \end{proof}

\begin{lemma}
\label{lem:stars}
Any $1$-star can be optimally colored.
\end{lemma}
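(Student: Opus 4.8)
The plan is to reduce the statement to a purely combinatorial selection problem and then solve that problem by an explicit construction. Write the $1$-star with center $c$, with $a\ge 0$ leaves $x_1,\dots,x_a$ at distance $1$ from $c$ and $b\ge 0$ legs of length two, the $j$-th leg being $c-m_j-y_j$. Then $n:=|V|=1+a+2b$, the number of branches is $p:=a+b\ge 2$, and the goal is a union vertex-distinguishing coloring with $q:=\logV{V}=\lceil\log_2(n+1)\rceil$ colors; note $q\ge 2$ and, since $2^{q-1}<n+1\le 2^q$, we have $2^{q-1}\le n\le 2^q-1$.

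The reduction is as follows. It suffices to choose pairwise distinct nonempty subsets of $\{1,\dots,q\}$: a set $S_{x_i}\subsetneq\{1,\dots,q\}$ for each leaf $x_i$, and a nested pair $S_{y_j}\subsetneq S_{m_j}\subsetneq\{1,\dots,q\}$ for each leg, such that $\bigcup_i S_{x_i}\cup\bigcup_j S_{m_j}=\{1,\dots,q\}$. Indeed, color $f(cx_i):=S_{x_i}$, $f(m_jy_j):=S_{y_j}$ and $f(cm_j):=S_{m_j}$ (all nonempty); then $\id{x_i}=S_{x_i}$, $\id{y_j}=S_{y_j}$, $\id{m_j}=S_{m_j}$ (using $S_{y_j}\subseteq S_{m_j}$) and $\id{c}=\bigcup_i S_{x_i}\cup\bigcup_j S_{m_j}=\{1,\dots,q\}$, which differs from every other code since all of those are proper subsets; the remaining codes are distinct by construction. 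Because $S_{y_j}\subseteq S_{m_j}$, the union condition is the same as $\bigcup_{v\ne c}S_v=\{1,\dots,q\}$; in particular a color lying only in some $S_{y_j}$ still lies in $S_{m_j}$, so coverage is insensitive to which sets serve as the ``small'' member of a leg, which is what makes it easy to arrange.

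For the construction, note that the number of labels to place, $n-1=a+2b$, is at most $2^q-2$, the number of nonempty proper subsets of $\{1,\dots,q\}$. The case $q=2$ forces $n=3$, i.e.\ the path $P_3$ (the other solution $(a,b)=(0,1)$ has $p<2$ and is excluded), handled directly by $f(cx_1)=\{1\}$, $f(cx_2)=\{2\}$. For $q\ge 3$, I would first fix a perfect matching of the $2^q-2$ nonempty proper subsets of $\{1,\dots,q\}$ into $2^{q-1}-1$ comparable (nested) pairs: start from the matching of the full Boolean lattice pairing each subset $S$ with $1\notin S$ to $S\cup\{1\}$, delete the two pairs broken by removing $\varnothing$ and $\{1,\dots,q\}$, and repair along the augmenting path $\{1\}-\{1,2\}-\{2\}-\{2,\dots,q\}$ (this is where $q\ge 3$ is needed, and it leaves the two new tops $\{1,2\}$ and $\{2,\dots,q\}$, whose union is $\{1,\dots,q\}$). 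Then use $b$ of these pairs as the legs and the remaining $a$ subsets as leaf labels. When $n=2^q-1$ every nonempty proper subset is used, so coverage is automatic; when $n<2^q-1$ there is slack, and one chooses the leaf labels (or keeps the two repaired pairs among the $b$ selected, which is possible since a spider has $p=b\ge 2$) so that the selected sets cover $\{1,\dots,q\}$ — e.g.\ by keeping one ``large'' label such as $\{1,\dots,q-1\}$ plus a set containing $q$.

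The main obstacle is the dense regime $n=2^q-1$, where every nonempty proper subset of $\{1,\dots,q\}$ must be used and there is no freedom left: one must check that the comparability graph on the nonempty proper subsets admits a matching of exactly the required size $b$ (equivalently, that deleting $2^{q-1}-1-b$ pairs from the perfect matching above leaves precisely the $a=2^q-2-2b$ leftover subsets that will become the leaf labels), and separately dispose of the degenerate small cases ($q=2$, or values of $a,b$ with $p<2$, which the non-triviality of the star rules out). Once the matching is in place, pairwise distinctness and coverage of $\{1,\dots,q\}$ are routine.
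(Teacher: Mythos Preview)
Your reduction and construction are correct and follow essentially the same approach as the paper: assign to each leg a nested pair of proper nonempty subsets obtained by adding/removing a fixed element, give leaves the leftover proper subsets, and ensure the center receives the full set. The paper pairs $S$ with $S\setminus\{k\}$ for $k\in S$, $|S|\ge 2$, $S\ne\{1,\dots,k\}$ and treats the extremal case $|X|=2^{k-1}-1$ by an ad~hoc adjustment of two legs, whereas you package the same idea as a perfect matching of the nonempty proper subsets (via $S\mapsto S\cup\{1\}$ plus one augmenting-path repair), which absorbs that extremal case uniformly; your coverage argument in the slack regime $n<2^q-1$ is sketchier than the paper's but the required case split ($b\ge 2$; $b=1$ with $a\ge 1$; $b=0$ with $a\ge 2$) is straightforward and your indicated choices work.
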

\begin{proof}
Let $S$ be a $1$-star. One can easily check that $P_3$ can be colored using $2$ colors (Theorem~\ref{theorem:paths} actually ensures that all the paths can be optimally colored). From now on we assume that $S \neq P_3$. Let us denote by $u$ the center of $S$ and by $n_1$ and $n_2$ the number of vertices at distance respectively $1$ and $2$ from $u$. Note that $n:=|V|=1+n_1+n_2$ and that $n_1 \geq n_2$ since each vertex incident to $u$ is incident to at most one vertex in the second neighborhood of $u$. Let $Y$ be the set of non-neighbors of $u$ distinct from $u$ and $X$ be the set of vertices in $N(u)$ incident to a vertex of $Y$. By definition of $1$-stars, $|X|=|Y|$ and $(X,Y)$ induces a perfect matching. Finally, let $Z =N(u) \setminus X$.

Let us denote by $k$ the integer $\logN$. Since $X$ and $Y$ are disjoint and neither of them contain $u$, we have $|X|\leq\frac{n-1}{2}\leq2^{k-1}-1$.
Assume first that $|X|\neq 2^{k-1}-1$. Let $\alpha$ be a coloring satisfying the following properties:
\begin{enumerate}[(i)]
 \item For every vertex $x \in X$, $\alpha(u,x)$ is a strict subset of $\{ 1, \ldots, k \}$ of size at least $2$ that contains color $k$ such that $\alpha(u,x) \neq \alpha(u,x')$ if $x \neq x'$. Moreover, if $X$ contains at least two vertices, then $\cup_{x \in X} \alpha(u,x) = \{ 1,\ldots, k\}$. Such sets necessarily exist since there are $2^{k-1}-2$ strict subsets of size at least $2$ of $\{1,\ldots,k\}$ containing $k$ and $|X|\leq 2^{k-1}-2$.
 \item Every edge $(x,y)$ between $x$ of $X$ and $y$ of $Y$ satisfies $\alpha(x,y) = \alpha(u,x) \setminus k$.
 \item Every edge $(u,z)$ with $z \in Z$, $\alpha(u,z)$ is a strict non-empty subset of $\{ 1, \ldots, k \}$ that has not been assigned yet to an edge. Moreover, if $Z$ is non-empty, the set $\{1,\ldots, k-1 \}$ must be one of these sets. If $X$ is empty then the label of at least one edge must also contain $k$. This is possible since there are at most $2^k-2$ edges in $G$ and $2^k-2$ strict non-empty subsets of $\{1,\ldots,k\}$.
\end{enumerate}



Note that for each vertex $v\neq u$, $\id{v}$ corresponds to a unique $\alpha(v,w)$ for a neighbor $w$ of $v$. This is clear for vertices of $Y\cup Z$ that have degree one. If $v\in X$ and $w$ is its neighbor in $Y$, we have by construction $\alpha(v,w)\subset \alpha(v,u)$, thus $\id{v}=\alpha(v,u)$. Moreover $\id{u}= \{ 1,\ldots, k \}$. Indeed, the degree of $u$ is at least $2$. If the size of $X$ is at least two then, by construction $\cup_{x \in X} \alpha(u,x) = \{ 1,\ldots, k\}$ and thus $\id{u} =\{1,\ldots,k \}$. If the size of $X$ is one, then $u$ has another neighbor in $Z$. Since $\alpha(u,x)$ contains $k$ and one edge between $u$ and $Z$ is labeled $\{1,\ldots, k-1 \}$, we have $\id{u}= \{1,\ldots,k\}$. Finally, if $X$ is empty, then one edge is labeled by $\{1,\ldots,k-1\}$ and one label contains $k$. Thus $\id{u} = \{1,\ldots,k\}$.

Since all the edges get strict distinct subsets of $\{1,...,k\}$, $\alpha$ is union vertex-distinguishing. 

Consider now the case $|X|=2^{k-1}-1$. It means that $Z$ is empty. Let $x_1,x_2$ two vertices of $X$ and $y_1$, $y_2$ their neighbors in $Y$. We construct $\alpha$ as follows:
\begin{enumerate}[(i)]
\item $\alpha(x_1,u)=\{1,\ldots,k-2,k\}$;
\item $\alpha(x_1,y_1)=\{k\}$;
\item $\alpha(x_2,u)=\{1,\ldots,k-1\}$;
\item $\alpha(x_2,y_2)=\{1,\ldots,k-2\}$;
\item For every other vertex $x \in X$, $\alpha(u,x)$ is a strict subset of $\{ 1, \ldots, k \}$ of size at least $2$ that contains color $k$ and that has not been assigned yet. Such sets necessarily exist since there are $2^{k-1}-3$ strict subsets of size at least $2$ of $\{1,\ldots,k\}$ containing $k$ that have not been assigned yet and there remain $2^{k-1}-3$ vertices in $X$.
 \item Every edge $(x,y)$ between $x$ of $X\setminus\{x_1,x_2\}$ and $y$ of $Y\setminus\{y_1,y_2\}$ satisfies $\alpha(x,y) = \alpha(u,x) \setminus k$.
\end{enumerate}

As before, for each vertex $v\neq u$, $\id{v}$ corresponds to a unique $\alpha(v,w)$ for a neighbor $w$ of $v$ and $\id{u}= \{ 1,\ldots, k \}$. Since all the edges get again strict distinct subsets of $\{1,...,k\}$, $\alpha$ is union vertex-distinguishing.

\end{proof}

According to Lemma~\ref{lem:treeDecomposition}, we know that every graph with no connected component isomorphic to a vertex or an edge admits a disjoint union of $1$-stars as an edge-subgraph. Since adding edges to a graph only costs one color, it remains to prove that the disjoint union of $1$-stars admits a union vertex-distinguishing edge coloring using the optimal number of colors plus one in order to prove Theorem~\ref{theorem:superiorBound}.

\begin{lemma}\label{lem:disjointoptimal}
A disjoint union of graphs that can be separately optimally colored can be colored together using at most the optimal number of colors plus one.
\end{lemma}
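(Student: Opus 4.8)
\noindent\textit{Proof idea.}
The plan is to proceed by induction on the number $m$ of components. Write $n=\sum_{i=1}^m n_i$ with $n_i=|V(G_i)|$, and put $k:=\lceil\log_2(n+1)\rceil$; the case $m=1$ is the hypothesis. The inductive step rests on a \emph{merging step}: if the components can be split into two nonempty groups $\mathcal A$ and $\mathcal B$ so that each of $\bigcup_{i\in\mathcal A}G_i$ and $\bigcup_{i\in\mathcal B}G_i$ admits a union vertex-distinguishing coloring with colors from $\{1,\dots,k\}$, then $G$ admits one with $k+1$ colors. Indeed, color the two groups separately with colors in $\{1,\dots,k\}$ and add the new color $k+1$ to every edge of $\bigcup_{i\in\mathcal B}G_i$: then the code of every vertex of $\mathcal B$ contains $k+1$ and the code of no vertex of $\mathcal A$ does, so codes in different groups differ, while inside each group the coloring is unchanged except that all of its codes gain $k+1$ (or none does), so it remains vertex-distinguishing.

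Hence the task reduces to splitting the components into two groups, each of which is $k$-colorable. Two sufficient conditions are at hand: a single component $G_i$ is optimally colorable, hence $k$-colorable since $\lceil\log_2(n_i+1)\rceil\le k$; and, by the induction hypothesis, any union of several of the $G_i$ of total size at most $2^{k-1}-1$ is colorable with $(k-1)+1=k$ colors. So it suffices that each group be \emph{either a single component, or a subfamily of total size at most $2^{k-1}-1$}. If some $n_i\ge 2^{k-1}$, take $\mathcal A=\{G_i\}$; the rest has total size $n-n_i\le(2^k-1)-2^{k-1}=2^{k-1}-1$, and we are done. Otherwise all $n_i\le 2^{k-1}-1$, and whenever the sizes admit a partition into two parts of total at most $2^{k-1}-1$ each — which a greedy/balancing argument provides as long as $n$ is not too close to $2^k$ — we are again done.

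The step I expect to be the real obstacle is the ``tight'' regime, typified by $n=2^k-1$ with all $n_i\le 2^{k-1}-1$ (which forces $m\ge 3$): there $2(2^{k-1}-1)=2^k-2<n$, so no split into two groups of total size at most $2^{k-1}-1$ exists. To handle it I would use a finer decomposition: every component here uses at most $\lceil\log_2((2^{k-1}-1)+1)\rceil=k-1$ colors, which buys room to recombine up to four groups using colors $k$ and $k+1$ as a two-bit flag while staying within $k+1$ colors. Call $G_i$ \emph{large} if $n_i\ge 2^{k-2}$ and \emph{small} otherwise; there are at most three large components (four would account for $4\cdot 2^{k-2}=2^k>n$ vertices), each becomes its own singleton group, and the small components — of total size $n$ minus the sizes of the large ones — are packed greedily into the remaining (at most three) groups, each of capacity $2^{k-2}-1$, with one small component allowed to stand alone if that helps. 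Verifying that this packing succeeds in the residual sub-cases, where the small components are themselves nearly full, is the delicate part; it can be closed by strengthening the induction hypothesis so that unions which are not completely full are recorded as optimally colorable, or by descending one more level of the same split.
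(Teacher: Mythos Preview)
Your merging step is exactly the device the paper uses: given two vertex-disjoint graphs, each with a union vertex-distinguishing coloring on colors $\{1,\ldots,k\}$, add a fresh color $k+1$ to every edge of one of them to obtain a union vertex-distinguishing $(k+1)$-coloring of the whole. The divergence is in how the induction is organised, and your version leaves a genuine gap in the tight regime you yourself flag.

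The difficulty there is real, not merely delicate. Your four-group scheme needs each non-singleton group to have total size at most $2^{k-2}-1$, so that the induction hypothesis gives it $(k-2)+1=k-1$ colors and the two flag colors $k,k+1$ can separate the four groups. But four such groups hold at most $4(2^{k-2}-1)=2^k-4<2^k-1$ vertices in total, and promoting one small component to a singleton does not recover the shortfall: with two large components, say, the remaining small mass can be $2^{k-1}-1$, which still exceeds the capacity $2(2^{k-2}-1)=2^{k-1}-2$ of the two leftover groups even after one small component is pulled out. ``Descending one more level'' reproduces the same arithmetic one scale down, and strengthening the hypothesis to record which unions are \emph{optimally} colorable is precisely the statement that needs an independent argument.

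The paper sidesteps splitting altogether. Call a graph a $k$-graph if its order lies in $[2^k,2^{k+1}-1]$. Your merging step, read the other way, says that the disjoint union of two optimally colorable $k$-graphs is an optimally colorable $(k+1)$-graph. So as long as two pieces share a level $k$, merge them and continue with a strictly shorter list of optimally colorable pieces. Once all levels are distinct, order the pieces $H_1,\ldots,H_\ell$ by size and absorb them one at a time: if $H_1\cup\cdots\cup H_{i-1}$ already has a distinguishing $(k_{i-1}+1)$-coloring and $H_i$ is a $k_i$-graph with $k_i\ge k_{i-1}+1$, put the fresh color $k_i+1$ on every edge of $H_i$; the result is a distinguishing $(k_i+1)$-coloring of $H_1\cup\cdots\cup H_i$. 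At the end you have used $k_\ell+1\le\lceil\log_2(n+1)\rceil+1$ colors, and no packing analysis is needed.
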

\begin{proof}
Let us first give a few general definitions. Given two graphs $H_1$ and $H_2$, we denote by $H_1 \cup H_2$ the disjoint union of $H_1$ and $H_2$.
A graph $K$ is a $k$-graph if its number of vertices is between $2^k$ and $2^{k+1}-1$. Note that if $H_1$ and $H_2$ are two $k$-graphs then $H_1 \cup H_2$ is a $(k+1)$-graph.

\begin{claim}\label{clm:unionoptimal}
 Let $H_1$ and $H_2$ be two $k$-graphs that can be optimally colored. Then $H_1 \cup H_2$ can be optimally colored.
\end{claim}
\begin{proof}
Indeed, let $\alpha$ (respectively $\beta$) be an optimal union vertex-distinguishing coloring of $H_1$ (resp. $H_2$) both using colors $\{1,\ldots,k\}$. Then consider the coloring $\gamma$ of $H_1 \cup H_2$ such that $\gamma(e)=\alpha(e)$ if $e$ is an edge of $H_1$ and $\gamma(e)=\beta(e) \cup \{ k+1 \}$ if $e$ is an edge of $H_2$. We claim that $\gamma$ is union vertex-distinguishing. Indeed two vertices $u,v$ of $H_1$ satisfies $\id{u} \neq \id{v}$ since $\alpha$ is distinguishing. Similarly, two vertices $u,v$ of $H_2$ satisfies $\id{u} \cap \{1,\ldots,k \} \neq \id{v} \cap \{1,\ldots,k\}$ since $\beta$ is distinguishing. Now let $u \in V(H_1)$ and $v \in V(H_2)$. Since any vertex of $H_2$ is incident to an edge of $H_2$, $k+1$ appears in $\id{v}$ while it does not appear in $\id{u}$. This completes the proof.

Thus $H_1 \cup H_2$ can be colored using $(k+1)$ colors. Since we noticed that if $H_1$ and $H_2$ are two $k$-graphs then $H_1 \cup H_2$ is a $(k+1)$-graph, this coloring is optimal.
\end{proof}

Using Claim~\ref{clm:unionoptimal}, let us prove the lemma.
Let $\mathcal{H}=\{ H_1,\ldots,H_\ell \}$ be $\ell$ graphs that can be optimally colored. Let us prove that their disjoint union can be colored using at most the optimal number of colors plus one.
We prove it by induction on the number of graphs in $\mathcal{H}$. The case $\ell=1$ is clear.

First assume that there exists an integer $k$ such that $\mathcal{H}$ contains two $k$-graphs, without loss of generality, we can assume that these graphs are $H_1$ and $H_2$. By Claim~\ref{clm:unionoptimal}, $H_1\cup H_2$ can be optimally colored. Thus $\mathcal{H'} = \mathcal{H} \cup (H_1 \cup H_2) \setminus \{ H_1,H_2 \}$ contains less graphs than $\mathcal{H}$ and all of them can be optimally colored. By induction their union, which also is the disjoint union of $ H_1,\ldots,H_\ell$ can be colored using at most the optimal number of colors plus one, which achieves the proof in that case.

Thus we can assume that for every $k$, there is at most one $k$-graph. Order the graphs of $\mathcal{H}$ in increasing size. Let us prove by induction on $i$ that if $H_i$ is a $k_i$-graph then the disjoint union of $H_1,H_2,\cdots,H_i$ can be colored using $k_i+1$ colors. The base case $i=1$ is trivial: $H_1$ can be colored using $k_1$ colors so it can be colored using $k_1+1$ colors.
Now, let us consider the inductive case. Let $H_1,\cdots,H_i$ be the graphs in $\mathcal{H}$.
By induction hypothesis there exists a coloring $\alpha$ using $k_{i-1}+1$ colors to color $H_1 \cup \cdots \cup H_{i-1}$. Moreover, since $H_i$ is a $k_i$-graph, there exists a $k_i$-coloring of $H_i$. Note that $k_i \geq k_{i-1}+1$ since there is a unique $k_i$-graph. Define $\gamma$ as $\alpha(e)$ if $e$ is an edge of $H_j$ with $j<i$ or $\beta(e) \cup \{ k_i+1 \}$ if $e \in E(H_i)$. The coloring $\gamma$ is a $(k_i+1)$-coloring of $H_1 \cup \cdots \cup H_i$. As in Claim~\ref{clm:unionoptimal}, we can show that it is vertex-distinguishing. Indeed two vertices $u,v$ of $H_1\cup \cdots \cup H_{i-1}$ satisfies $\id{u} \neq \id{v}$ since $\alpha$ is vertex-distinguishing. Similarly, two vertices $u,v$ of $H_i$ satisfies $id(u) \cap \{1,\ldots,k_i \} \neq \id{v} \cap \{1,\ldots,k_i\}$ since $\beta$ is vertex-distinguishing. Now let $u \in V(H_1 \cup \cdots \cup H_{i-1})$ and $v \in V(H_i)$. Since any vertex of $H_2$ is incident to an edge of $H_2$, $k_i+1$ appears in $\id{v}$ while it does not appear in $\id{u}$. Thus all the vertices can be identified.

Finally, $H = H_1 \cup \cdots \cup H_\ell$ can be colored using $k_\ell+1$ colors. By definition of $k_\ell$, we have
\[ k_\ell \leq \logV{V(H_\ell)} \leq \logV{V(H)}.\]
Thus the graph $H$ can be colored with the optimal number of colors plus one.
\end{proof}

Let us finally combine all these lemmas to prove Theorem~\ref{theorem:superiorBound}. Let $G$ be a graph that does not contain any connected component of size at most $2$. According to Lemma~\ref{lem:treeDecomposition}, $G$ admits a graph $H$ as an edge-subgraph that is a disjoint union of $1$-stars. Lemma~\ref{lem:stars} ensures that each of these $1$-stars can be optimally colored. Thus, the graph $H$ is the disjoint union of graphs that can be optimally colored. Lemma~\ref{lem:disjointoptimal} ensures that the whole graph $H$ can be colored using at most $\logV{V(H)}+1$ colors. Finally, since $H$ is an edge-subgraph of $G$, Lemma~\ref{lem:subgraphImpliesGraph} ensures that $G$ can be colored using at most $\logV{V(H)}+2$ colors. Thus Theorem~\ref{theorem:superiorBound} holds.

\section{Exact values for several classes of graphs}\label{sec:particularclasses}

In this section, we show that several sparse classes of graphs can be optimally colored. In particular, we show that paths, cycles (of large enough length) and complete binary trees can be optimally colored. All these results can be seen as a first step in order to improve the result of Theorem~\ref{theorem:superiorBound}. Indeed, if we can prove that every tree (resp. forest) can be colored with the optimal number of colors, then using Lemma~\ref{lem:subgraphImpliesGraph}, we can prove that any connected (resp. any) graph can be colored with at most $\logV{V(G)}$ colors since every connected graph admits a spanning tree (resp. forest). 

In Section~\ref{sec:path}, we prove that paths can be optimally colored, in Section~\ref{sec:cycles}, we focus on cycles, and finally, we consider the case of complete binary trees in Section~\ref{sec:binary}

\subsection{Paths}\label{sec:path}

This section is devoted to prove that paths can be optimally colored. We actually prove a slightly larger statement that will be useful for the cycle case (see Section \ref{sec:cycles}).

\begin{theorem}
\label{theorem:paths}
For $n \geq 3$, the path $P_n$ can be optimally colored. Moreover, there exists an optimal union vertex-distinguishing $m$-coloring of $P_{n}$ such that:
\begin{enumerate}[(i)]
\item $\id{u_1} = \{ 1 \}$;
\item $\id{u_{n}}=\{ m \}$;
\item the only vertex that can satisfy $\id{u_j}=\{1,m\}$ is $u_{n-1}$;
\end{enumerate}
where $u_1$,...,$u_n$ are the vertices of $P_n$.
\end{theorem}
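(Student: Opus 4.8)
The plan is to pass to edge labels and then construct the colouring by induction on $n$, roughly ``doubling'' the path at each step.

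First the reformulation: writing $P_n=u_1\cdots u_n$ and $e_i=u_iu_{i+1}$, a $k$-colouring is the same as a choice of non-empty sets $a_1,\dots,a_{n-1}\subseteq\{1,\dots,k\}$, with $\id{u_1}=a_1$, $\id{u_i}=a_{i-1}\cup a_i$ for $2\le i\le n-1$, and $\id{u_n}=a_{n-1}$. So the theorem asks for non-empty $a_1,\dots,a_{n-1}$ producing $n$ pairwise distinct codes with $a_1=\{1\}$ (this gives (i)), $a_{n-1}=\{m\}$ (this gives (ii)), and with $\{1,m\}$ occurring as a code only — if at all — as $\id{u_{n-1}}$. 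A convenient sufficient condition for a candidate sequence $c_1,\dots,c_n$ of distinct non-empty sets to be realizable by edge labels is, reading $c_0=c_{n+1}=\emptyset$, that $c_i\cap c_{i+1}\neq\emptyset$ and $c_i\subseteq c_{i-1}\cup c_{i+1}$ for all $i$; one then takes $a_i=c_i\cap c_{i+1}$. In particular, any sequence of distinct non-empty sets in which consecutive terms differ by exactly one element is realizable.

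Now argue by strong induction on $n$. The cases $3\le n\le 7$ are handled by exhibiting explicit colourings — for instance, for $n=7$ the codes $\{1\},\{1,2\},\{2\},\{2,3\},\{1,2,3\},\{1,3\},\{3\}$, and similarly short explicit lists for $n=3,4,5,6$. For $n\ge 8$ put $p=\lfloor n/2\rfloor\ge 4$, so $n=2p$ or $n=2p+1$, set $m'=\lceil \log_2(p+1)\rceil$, and observe $m=m'+1$. Take the colouring of $P_p$ given by induction and build that of $P_n$ by a \emph{seam} construction: colour $e_1,\dots,e_{p-1}$ as in $P_p$ (using only colours $\{1,\dots,m'\}$); colour the bridge edge $e_p$ with $\{m'\}$; and colour the remaining edges by a \emph{reversed} copy of the $P_p$ colouring with the new colour $m$ appended to every label, except the terminal edge which is set to $\{m\}$ so that $\id{u_n}=\{m\}$. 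When $n=2p$ one extra edge must be absorbed; one convenient way is to use $P_{p-1}$ in place of $P_p$ for the right block (and it is precisely this asymmetry that forces a handful of base cases, to keep $p$ large enough). Since colour $m$ now appears on a vertex exactly when that vertex lies in the right block or is the bridge vertex, the right block is set-theoretically disjoint from the left one; inside the left block the codes are those of $P_p$, hence distinct; inside the right block they are, after deleting $m$, the codes of a reversed copy of $P_p$ (with one correction at the boundary), hence distinct; and (i)--(iii) pass to the new endpoints because $a_1=\{1\}$ and the terminal edge is $\{m\}$.

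The step that needs real care — and, I expect, the main obstacle — is the seam itself: the bridge vertex $u_{p+1}$, its left neighbour $u_p$, and the last-but-one vertex of the right block receive codes that are \emph{not} literally codes of $P_p$, and one must verify these three avoid every other code. The delicate case is that condition (iii) applied to $P_p$ may \emph{force} the second-to-last code of the $P_p$ colouring to equal $\{1,m'\}$, and after the new colour is appended this set can coincide with a seam code. I would handle this by strengthening the induction hypothesis with a mild extra clause — for instance, that the colouring of $P_n$ can be chosen so that the label of $e_2$ is different from $\{m\}$ and from every code $\id{u_j}$ with $j\ge 3$ — checking that the explicit small colourings satisfy it and that the seam construction preserves it; this kills the clash. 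Tracking (iii) together with this extra clause through the doubling step is the bookkeeping-heavy heart of the argument.
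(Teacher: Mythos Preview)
Your plan is essentially the paper's: induct, place a previously-built path on the left, a bridge edge, then a reversed smaller path with the new colour $m$ adjoined on the right, and a terminal edge $\{m\}$. The paper does exactly this, but splits at $2^k-1$ rather than at $\lfloor n/2\rfloor$: it writes $n=2^k+\ell$, always uses $P_{2^k-1}$ on the left, disposes of $\ell=0,1,2$ by tacking on one, two, or three explicit edges, and for $\ell\ge3$ runs your seam construction with a reversed $P_\ell$ on the right. The advantage of fixing the left block to be the ``full'' path is purely organisational: only one recursive call is needed (to $P_{2^k-1}$), the three short cases replace your handful of base cases, and the verification of (iii) is immediate because every right-block code contains $k+1$ and the only right-block code not containing anything else is the terminal $\{k+1\}$.

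Where your proposal falls short is not that the seam is genuinely hard, but that you stop before checking it and then propose an auxiliary invariant that you neither formulate precisely nor verify --- and which, in fact, you do not need. With your bridge edge $\{m'\}$, the left endpoint $u_p$ keeps its old code $\{m'\}$ (the last edge of $P_p$ was already $\{m'\}$ by (ii)). The bridge vertex $u_{p+1}$ gets code $\{m',m'',m\}$, where $m''$ is the top colour of the right-hand path $P_q$; since every other right-block code lies in $\{1,\dots,m'',m\}$, this is new whenever $m'>m''$, and when $m'=m''$ it equals $\{m''\}\cup\{m\}$, which by (ii) for $P_q$ is the code of the last vertex of $P_q$ --- i.e.\ of $u_{p+1}$ itself. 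The penultimate vertex $u_{n-1}$ gets $\{1,m\}$ (from the first edge of $P_q$, which is $\{1\}$ by (i)), and no other right-block vertex can have $P_q$-code $\{1\}$. So (i)--(iii) go through with no extra clause. Your ``delicate case'' --- that (iii) might force $\id{w_{p-1}}=\{1,m'\}$ in $P_p$ --- is harmless: after appending $m$ this becomes $\{1,m',m\}$, which collides with nothing at the seam. In short, the strategy is the paper's and is correct; what is missing is the routine verification you defer, not the strengthened hypothesis you conjecture.
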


\begin{proof}

We prove the result by induction on $n$. The case $n=3$ is given by Figure \ref{fig:P3}.

\begin{figure}[!h]
\centering
\begin{tikzpicture}
  \foreach \I in {0,1,2}
  \node[noeud] at (\I,0) {};
  \draw (0,0)--(2,0);
	\draw (0.5,0.3) node {$\{1\}$};
	\draw (1.5,0.3) node {$\{2\}$};
\end{tikzpicture}
\caption{A union vertex-distinguishing coloring of $P_3$.}
\label{fig:P3}
\end{figure}
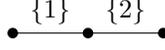

Let $n=2^k+\ell$ with $0\leq \ell<2^k$ and $k\geq 2$. By induction, there exists a union vertex-distinguishing coloring $\alpha_k$ of $P_{2^k-1}$ using $k$ colors and satisfying Conditions (i) to (iii). Using $\alpha_k$, we will construct a union vertex-distinguishing coloring $\beta$ of $P_n$ with $k+1$ colors and satisfying Conditions (i) to (iii). The vertices of $P_{2^k-1}$ are denoted $v_1$,...,$v_{2^k-1}$. Conditions (i) and (ii) will be trivially satisfied so we do not explicitly check them.

\

\noindent\textbf{Case 1:} $\ell=0$ ($n=2^k$)

We define the following coloring $\beta$ of $P_{n}$:
\begin{itemize}
\item $\beta(u_i,u_{i+1}) = \alpha_k(v_i,v_{i+1})$ for $1 \leq i \leq 2^{k}-2$
\item $\beta(u_{2^k-1},u_{2^k}) = \{ k+1 \}$
\end{itemize}

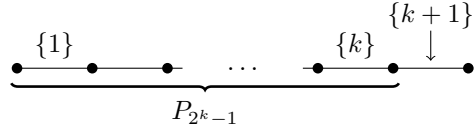
\begin{figure}[!h]
\centering
\begin{tikzpicture}[decoration={brace}]

\foreach \I in {0,1,2,4,5,6}
\node[noeud] at (\I,0) {};

\draw (3,0) node {$\ldots$};

\draw (0.5,0.3) node {$\{ 1 \}$};
\draw (4.5,0.3) node {$\{ k \}$};
\draw (5.5,0.7) node {$\{ k+1 \}$};
\draw [->] (5.5,0.5) -- (5.5,0.1);

\draw (0,0) -- (2.2,0) (3.8,0) -- (6,0);

\node (corner1) at (-0.2,-0.2) {};
\node (corner2) at (5.2,-0.2) {};
\node (pn1) at (2.5,-0.6) {$P_{2^k-1}$};
\draw [decorate,line width=1pt] (corner2) -- (corner1);

	
\end{tikzpicture}
\caption{The coloring of $P_{2^k}$ using the coloring $\alpha_{2^k-1}$.}
\label{fig:P2K}
\end{figure}
The construction is illustrated on Figure \ref{fig:P2K}.
The coloring $\beta$ is union vertex-distinguishing. Indeed the vertices $u_1,\ldots,u_{2^k-2}$ are pairwise distinguished by definition of $\alpha_k$. Moreover, we have $\idC{\beta}{u_{2^k-1}} = \{ k,k+1 \}$ and $\idC{\beta}{u_{2^k}} = \{ k+1 \}$. Thus these vertices can be distinguished from the others since for all $i \leq 2^k-2$, $\idC{\beta}{u_i}$ does not contain $k+1$. Finally, Condition (iii) is satisfied since there is no vertex with $\idC{\beta}{u_i}=\{1,k+1\}$. \medskip

\noindent\textbf{Case 2:} $\ell=1$

We define the following coloring $\beta$ of $P_{n}$:
\begin{itemize}
\item $\beta(u_i,u_{i+1}) = \alpha_{k}(v_i,v_{i+1})$ for $1 \leq i \leq 2^{k}-2$
\item $\beta(u_{2^k-1},u_{2^k}) = \{ k \}$
\item $\beta(u_{2^k},u_{2^k+1}) = \{ k+1 \}$
\end{itemize}

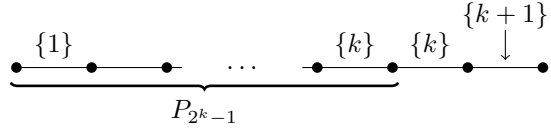
\begin{figure}[!h]
\centering
\begin{tikzpicture}[decoration={brace}]
\foreach \I in {0,1,2,4,5,6,7}
\node[noeud] at (\I,0) {};

	\node (corner1) at (-0.2,-0.2) {};
	\node (corner2) at (5.2,-0.2) {};
	\node (pn1) at (2.5,-0.6) {$P_{2^k-1}$};
	\draw [decorate,line width=1pt] (corner2) -- (corner1);
	\draw (0.5,0.3) node {$\{ 1 \}$};
	\draw (3,0) node {$\ldots$};
	\draw (4.5,0.3) node {$\{ k \}$};
	\draw (0,0) -- (2.2,0);
	\draw (3.8,0) -- (5,0);
	
	\draw (5.5,0.3) node {$\{ k \}$};
	\draw (6.5,0.7) node {$\{ k+1 \}$};
	\draw [->] (6.5,0.5) -- (6.5,0.1);
	\draw (5,0) -- (7,0);
\end{tikzpicture}
\caption{The coloring of $P_{2^k+1}$ using the coloring $\alpha_{k}$.}
\label{fig:P2K1}
\end{figure}
This construction is illustrated on Figure \ref{fig:P2K1}.
The coloring $\beta$ is union vertex-distinguishing. Indeed, the vertices $u_1,\ldots,u_{2^k-1}$ are pairwise distinguished since $\alpha_{k}$ is union vertex-distinguishing (note that, unlike the previous case, we have $\idC{\beta}{u_{2^k-1}} = \{ k \} = \idC{\alpha_{k}}{v_{2^k-1}}$). Moreover, we have $\idC{\beta}{u_{2^k+1}} = \{ k+1 \}$. Thus $u_{2^k+1}$ can be distinguished from the other vertices since  for all $1\leq i \leq 2^k-1$, $\idC{\beta}{u_i}$ does not contain $k+1$. Finally, there is no vertex with $\idC{\beta}{u_i}=\{1,k+1\}$.
\medskip

\noindent\textbf{Case 3:} $\ell=2$

We define the following coloring $\beta$ of $P_{n}$:
\begin{itemize}
\item $\beta(u_i,u_{i+1}) = \alpha_{k}(v_i,v_{i+1})$ for $1 \leq i \leq 2^{k}-2$
\item $\beta(u_{2^k-1},u_{2^k}) = \{ k \}$
\item $\beta(u_{2^k},u_{2^k+1}) = \{ 1,k+1 \}$
\item $\beta(u_{2^k+1},u_{2^k+2}) = \{ k+1 \}$
\end{itemize}

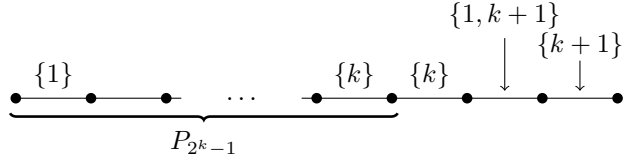
\begin{figure}[!h]
\centering
\begin{tikzpicture}[decoration={brace}]
\foreach \I in {0,1,2,4,5,6,7,8}
\node[noeud] at (\I,0) {};

	\node (corner1) at (-0.2,-0.2) {};
	\node (corner2) at (5.2,-0.2) {};
	\node (pn1) at (2.5,-0.6) {$P_{2^k-1}$};
	\draw [decorate,line width=1pt] (corner2) -- (corner1);
	\draw (0.5,0.3) node {$\{ 1 \}$};
	\draw (3,0) node {$\ldots$};
	\draw (4.5,0.3) node {$\{ k \}$};
	\draw (0,0) -- (2.2,0);
	\draw (3.8,0) -- (5,0);
	
	\draw (5.5,0.3) node {$\{ k \}$};
	\draw (6.5,1.1) node {$\{ 1,k+1 \}$};
	\draw (7.5,0.7) node {$\{ k+1 \}$};
	\draw [->] (6.5,0.8) -- (6.5,0.1);
        \draw [->] (7.5,0.5) -- (7.5,0.1);
	\draw (5,0) -- (8,0);
\end{tikzpicture}
\caption{The coloring of $P_{2^k+2}$ using the coloring $\alpha_{k}$.}
\label{fig:P2K2}
\end{figure}
This construction is illustrated on Figure \ref{fig:P2K2}.
The coloring $\beta$ is union vertex-distinguishing. Indeed, the vertices $u_1,\ldots,u_{2^k-1}$ are pairwise distinguished since $\alpha_{k}$ is union vertex-distinguishing. Moreover, we have $\idC{\beta}{u_{2^k}} = \{ 1,k,k+1 \}$, $\idC{\beta}{u_{2^k+1}} = \{ 1,k+1 \}$ and  $\idC{\beta}{u_{2^k+2}} = \{ k+1 \}$. Thus these vertices are distinguished from the other vertices since for all $1\leq i \leq 2^k-1$, $\idC{\beta}{u_i}$ does not contain $k+1$. Finally, the vertex with $\idC{\beta}{u_i}=\{1,k+1\}$ is $u_{n-1}$.
\medskip

\noindent\textbf{Case 4:} $3 \leq \ell \leq 2^{k}-1$

We denote the vertices of $P_\ell$ by $w_1,\ldots,w_\ell$. By induction, there exists a union vertex-distinguishing coloring $\alpha_\ell$ of $P_\ell$ satisfying satisfying Conditions (i) to (iii). Let $m$ be the number of colors in $\alpha_\ell$.
We define the following coloring $\beta$ of $P_{n}$:
\begin{itemize}
\item $\beta(u_i,u_{i+1}) = \alpha_{k}(v_i,v_{i+1})$ for $1 \leq i \leq 2^{k}-2$
\item $\beta(u_{2^k-1},u_{2^k}) = \{ k \}$
\item $\beta(u_{2^k+i},u_{2^k+i+1}) = \alpha_\ell(w_{\ell-i},w_{\ell-i-1}) \cup \{ k+1 \}$ for $0 \leq i \leq \ell-2$
\item $\beta(u_{n-2},u_{n-1}) = \{ 1,k+1 \}$
\item $\beta(u_{n-1},u_{n}) = \{ k+1 \}$
\end{itemize}

\begin{figure}[!h]
\centering
\begin{tikzpicture}[decoration={brace}]
  \foreach \I in {0,1,4,5,6,7,9,10,11}
\node[noeud](\I) at (\I,0) {};

	\node (corner1) at (-0.2,-0.2) {};
	\node (corner2) at (5.2,-0.2) {};

	\draw [decorate,line width=1pt] (corner2) -- node[below] {$P_{2^k-1}$} (corner1);
	\draw (0.5,0.3) node {$\{ 1 \}$};
	\draw (2.5,0) node {$\ldots$};
	\draw (4.5,0.3) node {$\{ k \}$};
	\draw (0,0) -- (1.2,0);
	\draw (3.8,0) -- (5,0);
	
	\draw (5.5,0.3) node {$\{ k \}$};
	\draw (5,0) -- (7.2,0);
	
	\node (corner4) at (10.2,-0.2) {};
	\node (corner3) at (5.8,-0.2) {};
        \draw [decorate,line width=1pt] (corner4) --  
node[text centered, text width=4cm,below] {
          reversed $P_\ell$ with color $k+1$ added to every edge}
        (corner3);

	\draw (8,0) node {$\ldots$};
	\draw (6.5,1.2) node {$\{ m,k+1 \}$};
	\draw [->] (6.5,1) -- (6.5,0.1);
	
        \draw (9.5,1.2) node {$\{ 1,k+1 \}$};
	\draw [->] (9.5,1) -- (9.5,0.1);
	\draw (8.8,0) -- (10,0);
	
	\draw (10.5,0.7) node {$\{ k+1 \}$};
	\draw [->] (10.5,0.5) -- (10.5,0.1);
	\draw (11,0) -- (10,0);
\end{tikzpicture}
\caption{The coloring of $P_{2^k+\ell}$ using the ones of $P_{2^k-1}$ and $P_\ell$.}
\label{fig:P2KJ}
\end{figure}
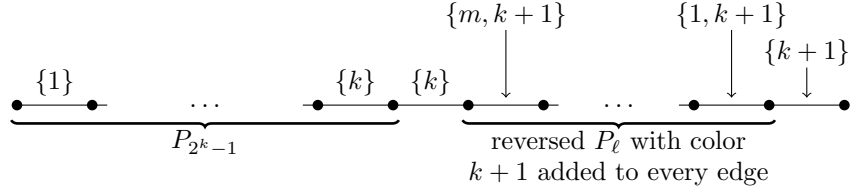
This construction of the coloring is illustrated on Figure \ref{fig:P2KJ}.
The vertices $u_1,\ldots,u_{2^k-1}$ are pairwise distinguished by definition of $\alpha_{k}$. The same holds for the vertices of $u_{2^k},\ldots,u_{n-1}$ by definition of $\alpha_\ell$. Indeed, $\idC{\beta}{u_{2^k+i}}\cap \{1,...,m\}=\idC{\alpha_\ell}{w_{\ell-i}}$ for $i\in {0,...\ell-1}$.

Besides, for $0 \leq i \leq 2^k-1$ and $2^k \leq i' \leq n$, $\idC{\beta}{u_i} \neq \idC{\beta}{u_{i'}}$, since $\{ k+1 \} \in \idC{\beta}{u_{i'}}$ and $\{ k+1 \} \notin \idC{\beta}{u_i}$. The vertex $u_{n}$ is also identified since no other vertex can have the label $\{k+1\}$. Finally, the coloring is optimal since it uses $k+1$ colors on a path of length $2^k+\ell$ with $0 \leq \ell \leq 2^k$.

\medskip
\end{proof}

\subsection{Cycles}\label{sec:cycles}

In this section, we prove that cycles of length different from 3 or 7 can be optimally colored. First note that every cycle can be colored with $\logV{G}$ or $\logV{G}+1$ colors. Indeed, the path $P_n$ is an edge-subgraph of the cycle $C_n$. By Theorem~\ref{theorem:paths}, the path $P_n$ can be optimally colored. Thus by Lemma~\ref{lem:subgraphImpliesGraph}, the graph $C_n$ can be colored with $\logV{G}+1$ colors. The remaining of this section is devoted to show that the plus one fact can be dropped in all but two cases.

\begin{lemma}
\label{lem:c3}
$\kiu(C_3)=3$ and $\kiu(C_7)=4$
\end{lemma}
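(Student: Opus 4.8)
The claim is the exceptional case for cycles: $C_3$ needs $3$ colors and $C_7$ needs $4$ colors, even though the generic lower bound gives $\logV{V}=2$ for $C_3$ and $\logV{V}=3$ for $C_7$. So for each of the two cycles I need two things: (a) an explicit union vertex-distinguishing coloring matching the claimed value (to get the upper bound), and (b) an impossibility argument showing the generic lower bound cannot be achieved (to get the matching lower bound). Part (a) is a routine finite check; part (b) is where the real work is.

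**The $C_3$ case.** Here $n=3$, so $\logV{V}=2$, and I must rule out a union vertex-distinguishing $2$-coloring. The nonempty subsets of $\{1,2\}$ are $\{1\}$, $\{2\}$, $\{1,2\}$, so the three vertices of $C_3$ would have to receive exactly these three codes in some order. Say $\id{x}=\{1\}$. Then both edges incident to $x$ must be colored $\{1\}$ (any edge at $x$ carrying $2$ would put $2\in\id{x}$, and an edge cannot carry the empty set). Similarly the vertex $y$ with $\id{y}=\{2\}$ forces both its incident edges to be colored $\{2\}$. But $x$ and $y$ are adjacent in $C_3$, so the edge $xy$ is simultaneously $\{1\}$ and $\{2\}$ — contradiction. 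Hence $\kiu(C_3)\geq 3$, and coloring the three edges $\{1\},\{2\},\{3\}$ gives distinct codes $\{1,2\},\{2,3\},\{1,3\}$, so $\kiu(C_3)=3$.

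**The $C_7$ case.** Here $n=7$, $\logV{V}=3$, and I must rule out a union vertex-distinguishing $3$-coloring. The seven nonempty subsets of $\{1,2,3\}$ number exactly $7$, so a $3$-coloring that distinguishes all vertices must realize \emph{all seven} subsets as codes, one per vertex. In particular some vertex $x$ has $\id{x}=\{1\}$ and some vertex $y$ has $\id{y}=\{2\}$, and as in the $C_3$ argument both edges at $x$ are forced to be $\{1\}$ and both edges at $y$ are forced to be $\{2\}$. Thus the two neighbors of $x$ each have $1$ in their code, and the two neighbors of $y$ each have $2$ in their code; moreover $x$ and $y$ are not adjacent (else the shared edge would be both $\{1\}$ and $\{2\}$). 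Likewise the vertex $z$ with $\id{z}=\{3\}$ forces both its edges to $\{3\}$, and $z$ is adjacent to neither $x$ nor $y$. In $C_7$, the three pairwise-nonadjacent vertices $x,y,z$ together with their forced pendant-colored edges pin down a large part of the cycle; I will push this case analysis (keeping track of which of the seven positions around the cycle are $x,y,z$, up to rotation and reflection) until some edge is forced to carry two different singletons, or until the remaining vertices cannot pick up the four ``mixed'' codes $\{1,2\},\{1,3\},\{2,3\},\{1,2,3\}$ without a collision. That finite case check — essentially a handful of cyclic arrangements — is the only real obstacle, and I expect it to terminate quickly because each of $x,y,z$ monochromatizes four incident edge-endpoints, leaving very little freedom. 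For the upper bound I exhibit a $4$-coloring of $C_7$ with all seven codes distinct (for instance color consecutive edges $\{1\},\{2\},\{3\},\{4\},\{1,2\},\{1,3\},\{2,3\}$ and verify the seven vertex-codes are pairwise different), giving $\kiu(C_7)\le 4$; combined with the lower bound, $\kiu(C_7)=4$.

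**Main obstacle.** Both lower bounds reduce to the same principle — a vertex with a singleton code forces its two incident edges to that singleton, and two such vertices for distinct singletons cannot be adjacent — but for $C_7$ this must be combined with a short exhaustive check over the cyclic placements of the three singleton-coded vertices. Organizing that case analysis cleanly (exploiting the dihedral symmetry of $C_7$ to cut down the number of cases) is the part that needs care; everything else is immediate.
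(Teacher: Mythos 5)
Your overall strategy is the same as the paper's: a vertex whose code is a singleton forces both of its incident edges to carry exactly that singleton, so the vertices coded $\{1\},\{2\},\{3\}$ must be pairwise non-adjacent. Your $C_3$ argument is complete and correct. For $C_7$, however, you stop exactly where the real content begins: you announce a case analysis over the cyclic placements of $x,y,z$ and say you ``expect it to terminate quickly,'' but you never actually derive a contradiction, so the lower bound $\kiu(C_7)\ge 4$ is not proved. The observation you are missing is that the placement is unique up to rotation and reflection: the three gaps between consecutive chosen vertices around $C_7$ must each be at least $2$ and sum to $7$, hence are $2,2,3$, so one may take the singleton-coded vertices to be $u_1,u_3,u_5$. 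Then $\id{u_2}=\{1,2\}$ and $\id{u_4}=\{2,3\}$ are forced, so $u_6$ and $u_7$ must receive the remaining codes $\{1,3\}$ and $\{1,2,3\}$. Since the edges $(u_5,u_6)$ and $(u_7,u_1)$ are forced to be $\{3\}$ and $\{1\}$ respectively, the only way for color $2$ to enter either of these two codes is to have $2\in\alpha(u_6,u_7)$; but then $2$ lies in both codes, so neither vertex can be coded $\{1,3\}$. That short finish is what your write-up lacks; without it the lemma is not established.

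Separately, the explicit coloring you offer for $\kiu(C_7)\le 4$ is invalid: with consecutive edge colors $\{1\},\{2\},\{3\},\{4\},\{1,2\},\{1,3\},\{2,3\}$, the vertex between $\{1,2\}$ and $\{1,3\}$, the vertex between $\{1,3\}$ and $\{2,3\}$, and the vertex between $\{2,3\}$ and $\{1\}$ all receive the code $\{1,2,3\}$. The upper bound itself is not in danger --- the paper obtains it for free, since $P_7$ is an edge-subgraph of $C_7$ that is optimally $3$-colorable by Theorem~\ref{theorem:paths}, and Lemma~\ref{lem:subgraphImpliesGraph} adds at most one color --- but your witness must be repaired or replaced by such an argument.
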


\begin{proof}
We mentioned in the introduction that no clique of size $2^k-1$ can be colored with only $k$ colors. Since $C_3=K_3$, the result holds for $C_3$.

We now concentrate on the graph $C_7$. Let $u_1,\ldots,u_7$ be the seven vertices in $C_7$. We know by Theorem \ref{theorem:superiorBound} that at least three colors are required. We will prove that three colors are not enough.
We proceed by contradiction: assume $C_7$ is distinguished with a coloring $\alpha$ using only three colors. Thus, there are three vertices $v_1,v_2$ and $v_3$ such that $\id{v_i}=\{i\}$. Since all the edges incident to $v_i$ are labeled with $i$, $v_1,v_2$ and $v_3$ are pairwise non incident. Without loss of generality, we can assume that $\id{u_1}=\{1\}$, $\id{u_3}=\{2\}$ and $\id{u_5}=\{3\}$. Thus we have $\id{u_2}=\{1,2\}$ and $\id{u_4}=\{2,3\}$.
Now, we need to have the labels $\{1,3\}$ and $\{1,2,3\}$ on the vertices $u_6$ and $u_7$. However, since $2\not\in \alpha(u_7,u_1),\alpha(u_5,u_6)$, we have $2\in\alpha(u_6,u_7)$, which is a contradiction with the fact that either $u_6$ or $u_7$ has the label $\{1,3\}$.
Thus we need at least four colors to distinguish $C_7$.

%
\end{proof}

\begin{theorem}
\label{theorem:cycles}
For $n \geq 4$, $n \neq 7$, $C_n$ can be optimally colored.
\end{theorem}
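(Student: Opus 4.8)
The plan is to mimic the inductive structure of Theorem~\ref{theorem:paths}: write $n = 2^k + \ell$ with $0 \le \ell < 2^k$ and $k = \lceil \log_2(n+1)\rceil - 1$ (so that $\logV{V(C_n)} = k+1$), and build an optimal $(k+1)$-coloring of $C_n$ from an optimal coloring of a smaller structure. The natural candidate is to take $P_n$ (an edge-subgraph of $C_n$) with the coloring guaranteed by Theorem~\ref{theorem:paths}, which uses $k+1$ colors and satisfies $\id{u_1}=\{1\}$, $\id{u_n}=\{m\}$, and has $u_{n-1}$ as the only possible vertex with code $\{1,m\}$. Adding the missing edge $u_nu_1$ back changes only the codes of $u_1$ and $u_n$: the new codes become $\id{u_1} \cup f(u_nu_1)$ and $\id{u_n} \cup f(u_nu_1)$. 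So the whole difficulty is to choose the colour set $f(u_nu_1) \subseteq \{1,\ldots,k+1\}$ on the closing edge so that (a) these two new codes are distinct from each other, and (b) neither collides with any of the $n-2$ codes already used on $u_2,\ldots,u_{n-1}$.

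The key step is therefore a careful analysis of which codes are "free". Since the path coloring from Theorem~\ref{theorem:paths} is itself built recursively, I would first strengthen that theorem (or re-examine its proof) to record exactly which subsets of $\{1,\ldots,k+1\}$ appear as vertex codes — in particular to guarantee that some specific small subsets, e.g. $\{k+1\}$ or $\{1,k+1\}$ or $\{1,\ldots,k+1\}$, are \emph{not} used by any internal vertex. Then closing the cycle with an appropriately chosen $f(u_nu_1)$ sends $\id{u_1}$ and $\id{u_n}$ to two such unused sets. One clean option: pick $f(u_nu_1)$ so that $\id{u_1}$ becomes the full set $\{1,\ldots,k+1\}$ and $\id{u_n}$ becomes something like $\{m, k+1\}$ that the path construction avoids on internal vertices; alternatively, exploit condition (iii) to know $\{1,m\}$ is essentially unused and route one endpoint there. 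Since there are $2^{k+1}-1$ available nonempty subsets and only $n \le 2^{k+1}-1$ vertices, there is enough room, but the argument must be uniform in $\ell$, which likely forces splitting into the same cases $\ell = 0,1,2$ and $3 \le \ell \le 2^k-1$ as in Theorem~\ref{theorem:paths}, with small base cases ($n = 4,5,6$, and noting $n=7$ is excluded by Lemma~\ref{lem:c3}) checked by hand.

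I expect the main obstacle to be precisely this bookkeeping: the path coloring is defined by a four-case recursion, so "which codes are used on $P_n$" is itself a recursively-defined set, and I need an invariant strong enough to survive the induction yet concrete enough to pin down a valid colour for the closing edge in every case. A secondary subtlety is that the code of $u_n$ in $P_n$ is the singleton $\{m\}$ while the code of $u_{n-1}$ may already be $\{1,m\}$; adding colour $1$ (or $m$) on the closing edge could accidentally push $u_n$ onto $u_{n-1}$'s code, so the choice of $f(u_nu_1)$ must be coordinated with condition (iii). Once the right invariant is in place — essentially, "the path construction can be arranged so that a designated short list of subsets involving the top colour $k+1$ is reserved" — the verification of the cycle colouring is routine, exactly parallel to the path case, and Lemma~\ref{lem:c3} supplies the two genuine exceptions $C_3$ and $C_7$.
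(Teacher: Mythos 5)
Your overall plan---close up an optimally colored $P_n$ by choosing the color of the closing edge $u_nu_1$ so that the two modified codes avoid all the others---runs into a genuine obstruction that you have not addressed, and it sits exactly at the hard case of the theorem. When $n=2^k-1$, the optimal coloring of $P_n$ uses $k$ colors and its $n$ vertices receive \emph{all} $2^k-1$ nonempty subsets of $\{1,\ldots,k\}$ as codes, so there are no free codes at all; your remark that ``there is enough room'' is false precisely here. Worse, adding the closing edge can only enlarge the codes of $u_1$ and $u_n$ (from $\{1\}$ and $\{k\}$ respectively): if $f(u_nu_1)=\{1\}$ then $u_n$ acquires the code $\{1,k\}$, which is already held by $u_{n-1}$ (by condition (iii) of Theorem~\ref{theorem:paths}, that is the vertex carrying $\{1,k\}$, and its code does not change); symmetrically for $f(u_nu_1)=\{k\}$; and any other nonempty choice strictly enlarges both endpoint codes to sets of size at least $2$, each of which is therefore neither $\{1\}$ nor $\{k\}$ and hence coincides with the unchanged code of some internal vertex. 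So no choice of $f(u_nu_1)$ can work, and the values $n=15,31,63,\ldots$ (all within the scope of the theorem) cannot be handled along your lines. The paper treats them by an entirely different argument (Lemma~\ref{lem:cycles2k-1}): an induction doubling $C_{2^k-1}$ into $C_{2^{k+1}-1}$ from two modified copies plus one extra vertex, anchored at an explicit hand-built coloring of $C_{15}$; this is also where $C_7$ emerges as a genuine exception.

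For $n\neq 2^k-1$ the paper also proceeds differently, and more economically, than you propose: instead of $P_n$ it starts from an optimal coloring of $P_{n+1}$ (which still uses only $\lceil\log_2(n+1)\rceil$ colors because $n+1$ has not crossed the next power-of-two threshold), keeps only its first $n-1$ edges on the $n$ cycle vertices, and closes the cycle with $\{1\}$ or $\{k+1\}$ according to whether $1$ lies in the code of $v_n$. The discarded path vertex supplies exactly the slack your construction lacks, and conditions (i)--(iii) of Theorem~\ref{theorem:paths} as already stated suffice: no additional ``reserved codes'' invariant needs to be threaded through the four-case path recursion. In short, your proposal correctly identifies the difficulty (controlling which codes are free) but does not resolve it, and in the tight case $n=2^k-1$ it cannot be resolved by closing up $P_n$ at all.
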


We first prove the case $n\neq 2^k-1$.

\begin{lemma}
\label{lem:cycles}
Let $n \geq 4$ such that $n \neq 2^k-1$ for any $k$. Then $\kiu(C_n)= \logN$.
\end{lemma}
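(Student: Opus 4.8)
The plan is to mimic the path construction of Theorem~\ref{theorem:paths}, but to close the cycle using the two ``free'' endpoint codes $\{1\}$ and $\{m\}$ guaranteed by that theorem. Write $k = \logN$, so $2^{k-1} \leq n \leq 2^k - 2$ (the case $n = 2^k-1$ being explicitly excluded). We want a union vertex-distinguishing $k$-coloring of $C_n$ with vertices $u_1, \ldots, u_n$ in cyclic order. The idea is: take an optimal $k$-coloring $\alpha$ of the path $P_{n}$ on vertices $u_1, \ldots, u_n$ satisfying conditions (i)--(iii) of Theorem~\ref{theorem:paths}, so that $\id{u_1} = \{1\}$ and $\id{u_n} = \{m\}$ for $m \leq k$; then add the closing edge $u_n u_1$ and assign it a label that merges these two codes into something that does not collide with any other vertex code.

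First I would check that $P_n$ can actually be optimally colored with $k$ colors for $n$ in this range: since $2^{k-1}-1 < n \leq 2^k-1$, the path $P_n$ has $\logV{V(P_n)} = k$, so Theorem~\ref{theorem:paths} gives such a coloring $\alpha$ with $m = k$ colors actually used (one must argue the coloring produced uses exactly $k$ colors, which follows from $n > 2^{k-1}-1$ and the inductive construction). Thus $\id{u_1} = \{1\}$, $\id{u_n} = \{k\}$, and the only vertex that can have code $\{1,k\}$ is $u_{n-1}$. Now I would set $\beta(e) = \alpha(e)$ for every edge $e$ of the path, and define $\beta(u_n u_1)$ on the new edge. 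The natural first attempt is $\beta(u_n u_1) = \{1, k\}$: this changes only the codes of $u_1$ and $u_n$, turning both into $\{1,k\} \cup (\text{old code}) = \{1,k\}$, which immediately fails since then $u_1$ and $u_n$ collide. So instead one should make the new edge's label push $u_1$ and $u_n$ to \emph{distinct} new codes, e.g. choosing a color set $S$ with $\beta(u_n u_1) = S$ so that $\{1\} \cup S \neq \{k\} \cup S$ and neither equals any existing path code. Concretely, pick $S$ to contain a color, say $2$, not forced to appear, so $\id{u_1}$ becomes $\{1\} \cup S$ and $\id{u_n}$ becomes $\{k\} \cup S$; these differ as long as $k \notin S$ would be needed for one and $1 \notin S$ for the other — so take $S = \{2\}$ if $k \geq 3$, giving $\id{u_1} = \{1,2\}$ and $\id{u_n} = \{2,k\}$. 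One then has to verify these two new codes are not already used by some $u_j$ in the path, and here conditions (i)--(iii) and a careful bookkeeping of which small codes the path construction produces are exactly what is needed; if a collision arises, the remaining slack (there are $2^k - 1 \geq n+1$ available nonempty subsets, strictly more than $n$) should allow a different choice of $S$, possibly after also locally modifying one path edge near $u_{n-1}$.

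The main obstacle I expect is precisely this collision-avoidance at the join: the path coloring is built recursively and I do not have fine control over the full list of vertex codes it produces, only the three boundary guarantees. So the real work is either (a) strengthening the inductive hypothesis of a path-type lemma to also guarantee that certain small codes like $\{1,2\}$ and $\{2,k\}$ are \emph{free} (not used), mirroring how condition (iii) reserves $\{1,k\}$, and re-running the Theorem~\ref{theorem:paths} induction with this extra bookkeeping; or (b) doing a direct inductive construction on cycles analogous to the four-case split ($\ell = 0,1,2$ and $3 \leq \ell \leq 2^k-1$) used for paths, building $C_n$ from $C_{2^k-1}$ or from a smaller cycle plus a path segment with color $k+1$ added. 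Option (b) has the advantage that $C_{2^k-1}$ is a legitimate base object (handled separately, as the lemma statement suggests, in the $n = 2^k-1$ case treated elsewhere), and the ``add a new color to a reversed path and hang it off'' trick transplants directly. I would lean toward (b): set up the induction $C_n$ for $2^{k-1} \le n \le 2^k-2$, peel off a path of length $\ell = n - (2^{k-1}-1)$ colored with the fresh color $k$ added, glued onto an optimal coloring of $C_{2^{k-1}-1}$ (or of $C_{2^{k-1}}$, $C_{2^{k-1}+1}$ in the small-$\ell$ cases), exactly paralleling Cases~1--4 of Theorem~\ref{theorem:paths}, and verify vertex-distinguishability by the same ``color $k$ separates the two halves, each half distinguished by induction'' argument. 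The base cases $C_4, C_5, C_6$ (and whichever small cycles the recursion bottoms out on) would be checked by hand.
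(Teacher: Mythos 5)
Your overall instinct --- colour a spanning path and close it up using the endpoint guarantees of Theorem~\ref{theorem:paths} --- is the right one, but the proof has a genuine gap exactly where you say it does: the collision check at the join is never carried out, and neither of your two proposed repairs is executed. With a $k$-colouring of $P_n$ on the \emph{same} $n$ vertices, the two new codes $\{1\}\cup S$ and $\{k\}\cup S$ must avoid all $n-2$ interior codes, and since $n$ can be as large as $2^k-2$ there is almost no slack ($2^k-1-n$ can equal $1$); ``pick a different $S$ if a collision arises'' is not an argument, because you have no control over which interior codes the recursive path construction actually produces.

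The idea you are missing is to take the path one vertex \emph{longer} than the cycle. The paper colours $P_{n+1}$ on vertices $v_1,\dots,v_{n+1}$ --- still with $\logN$ colours, since $n+1\le 2^{\logN}-1$ precisely because $n\neq 2^k-1$ --- transfers the first $n-1$ path edges onto the cycle, discards the last edge $(v_n,v_{n+1})$, and labels the closing edge $u_1u_n$ with the \emph{singleton} $\{1\}$ if $1\in\idC{\alpha}{v_n}$ and with $\{m\}=\idC{\alpha}{v_{n+1}}$ otherwise. The interior vertices $u_2,\dots,u_{n-1}$ keep the codes of $v_2,\dots,v_{n-1}$; the singleton put on the closing edge is in either case already contained in $\idC{\alpha}{v_n}$, so $u_n$ retains the code of $v_n$; and $u_1$ receives either $\{1\}=\idC{\alpha}{v_1}$ or $\{1,m\}$ --- and $\{1,m\}$ is exactly the code that condition~(iii) of Theorem~\ref{theorem:paths} reserves for $v_n$, so it cannot clash with any interior vertex. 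In other words, the ``extra bookkeeping'' you call for in your option~(a) is already built into the statement of Theorem~\ref{theorem:paths}; what makes it usable is the shift from $P_n$ to $P_{n+1}$, which sacrifices the code $\{m\}$ of the extra vertex and frees $\{1,m\}$ for the join. Your option~(b), a direct four-case induction on cycles, is plausible but entirely unbuilt, and closing each cycle in that induction would run into the very join problem you did not solve here.
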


\begin{proof}
Let $k\geq 2$ and $n$ such that $2^k\leq n < 2^{k+1}-1$.
We denote the vertices of $C_n$ as $u_1,\ldots,u_n$.
We denote by $v_1,\ldots,v_{n+1}$ the vertices of $P_{n+1}$. Since $\lceil \log_2(n+2) \rceil =k+1$, Theorem~\ref{theorem:paths} ensures that there exists a union vertex-distinguishing coloring $\alpha$ of $P_{n+1}$ using $k+1$ colors with $\idC{\alpha}{v_1}=\{1\}$, $\idC{\alpha}{v_{n+1}}=\{k+1\}$ and where the only vertex of $P_{n+1}$ that can be identify by $\{1,k+1\}$ is $v_n$. 

We define the following coloring $\beta$ of $C_n$ using $k+1=\logN$ colors:
\begin{itemize}
\item $\beta(u_i,u_{i+1})=\alpha(v_i,v_{i+1})$ for $1 \leq i \leq n-1$
\item $\beta(u_1,u_n) = \left\{
\begin{array}{ll}
\{1\} & \mbox{if } 1 \in \idC{\alpha}{v_n} \\
\{k+1\} & \mbox{otherwise}
\end{array}
\right.$
\end{itemize}

This construction is shown on Figure \ref{fig:CN}.

\begin{figure}[!h]
\centering
\begin{tikzpicture}[decoration={brace}]
  \foreach \I in {0,1,3,4,5}
  \node[noeud] at (\I,0) {};
	\node (corner1) at (-0.2,0.6) {};
	\node (corner2) at (5.2,0.6) {};
	\node (pn1) at (2.5,1) {$P_{n+1}$};
	\draw [decorate,line width=1pt] (corner1) -- (corner2);
	\draw (0.5,0.3) node {$\{ 1 \}$};
	\draw (2,0) node {$\ldots$};
	\draw (3.5,0.3) node {$E$};
	\draw (0,0) -- (1.2,0);
	\draw (2.8,0) -- (5,0);
	
	\draw (0,0) to[out=-20,in=200] node[below, text centered, text width=3cm] {$\{ 1 \}$ if $1 \in E$, $\{ k+1 \}$ otherwise} (4,0);
\end{tikzpicture}
\caption{The construction of $C_n$ from $P_{n+1}$.}
\label{fig:CN}
\end{figure}
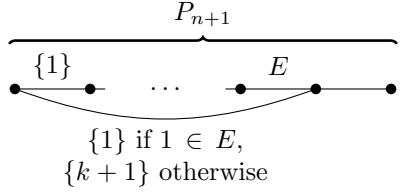

We now prove that $\beta$ is union vertex-distinguishing.
For every vertex $u_i$ with $2 \leq i \leq n-1$, $\idC{\beta}{u_i} = \idC{\alpha}{v_i}$. These vertices are distinguished since the corresponding vertices are pairwise distinguished in $\alpha$. Thus, we only need to study $u_1$ and $u_n$.
Since $\alpha$ satisfies the conclusion of Theorem~\ref{theorem:paths}, if a vertex is identified by $\{ 1,k+1\}$ in $P_{n+1}$, it must be $v_n$. In this case, we have $\idC{\beta}{u_1} = \idC{\alpha}{v_1}=\{1\}$ and $\idC{\beta}{u_n} = \idC{\alpha}{v_n}=\{1,k+1\}$, and those vertices are distinguished. Otherwise, we have $\idC{\beta}{u_n} = \idC{\alpha}{v_n}$ and $\idC{\beta}{u_1} = \{ 1,k+1 \} \neq \idC{\alpha}{v_i}$ (for $2 \leq i \leq n$) and those vertices are distinguished.
\end{proof}

To conclude the proof of Theorem~\ref{theorem:cycles}, we now deal with the case $n=2^k-1$.
\begin{lemma}
\label{lem:cycles2k-1}
Let $k \geq 4$ and $n=2^k-1$. Then $\kiu(C_{2^k-1})= k$.
\end{lemma}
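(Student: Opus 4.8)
The statement to prove is $\kiu(C_{2^k-1}) = k$ for $k \geq 4$. Since $n = 2^k - 1$ we have $\logN = k$, so the lower bound $\kiu(C_n) \geq k$ is immediate from Proposition~\ref{prop:lowerBound}; the whole content is the construction of a union vertex-distinguishing $k$-coloring of $C_n$. My first move is to try to mimic the cycle-from-path construction of Lemma~\ref{lem:cycles}, but that approach is exactly what fails here: a $k$-coloring of $P_{n+1} = P_{2^k}$ would need $\logV{V} = k+1$ colors by Proposition~\ref{prop:lowerBound}, so there is no optimal $k$-coloring of $P_{2^k}$ to borrow. Instead I would start from an optimal $k$-coloring of $P_n = P_{2^k-1}$ (which exists by Theorem~\ref{theorem:paths}, with $\idC{\alpha}{u_1} = \{1\}$, $\idC{\alpha}{u_n} = \{k\}$, and $\{1,k\}$ only possibly on $u_{n-1}$) and close it into a cycle by adding the edge $(u_1, u_n)$. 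The difficulty is that adding this edge merges information into the codes of $u_1$ and $u_n$: we would get $\idC{}{u_1} = \{1\} \cup \beta(u_1,u_n)$ and $\idC{}{u_n} = \{k\} \cup \beta(u_1,u_n)$, and these two (plus all the interior vertices) must all stay distinct.

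**Key steps.** First, I would observe that in the path coloring $\alpha$ of $P_{2^k-1}$, the interior vertices $u_2, \dots, u_{n-1}$ have codes that are pairwise distinct and, crucially, I want to ensure none of them equals the new codes we create for $u_1$ and $u_n$. Second, the natural choice is to set $\beta(u_1, u_n)$ to be some carefully chosen set $C$; then $\idC{\beta}{u_1} = \{1\} \cup C$ and $\idC{\beta}{u_n} = \{k\} \cup C$. For these to be distinct from each other we need (roughly) that $k \notin \{1\} \cup C$ or $1 \notin \{k\}\cup C$, i.e. we cannot have both $1$ and $k$ in $C$ — so taking $C = \{1\}$ or $C = \{k\}$ is safest, as in Lemma~\ref{lem:cycles}. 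With $C = \{1\}$ we get $\idC{\beta}{u_1} = \{1\}$ and $\idC{\beta}{u_n} = \{1,k\}$; with $C = \{k\}$ we get $\idC{\beta}{u_1} = \{1,k\}$ and $\idC{\beta}{u_n} = \{k\}$. The third step is the delicate one: I must re-engineer the path coloring itself so that whichever of these two options I pick, the code $\{1,k\}$ does not collide with any interior vertex. Theorem~\ref{theorem:paths}(iii) tells us the only interior vertex that could carry $\{1,k\}$ is $u_{n-1}$; so I would strengthen the induction, or directly build a bespoke $k$-coloring of $P_{2^k-1}$ for the range $k \geq 4$, in which $u_{n-1}$ does \emph{not} get the code $\{1,k\}$ — this is where the extra room ($2^k - 1$ vertices, $2^k - 1$ nonempty subsets available, and plenty of slack once $k \geq 4$) is used. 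Then closing with $\beta(u_1,u_n) = \{k\}$ (say) gives $u_1 \mapsto \{1,k\}$, which is now free, and $u_n \mapsto \{k\}$, while every interior code is unchanged; so $\beta$ is union vertex-distinguishing and uses exactly $k$ colors.

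**Alternative route and the main obstacle.** An alternative, possibly cleaner, route is to give a direct recursive construction for $C_{2^k-1}$: note that $C_{2^k-1}$ can be cut into an arc isomorphic to $P_{2^{k-1}}$ and an arc isomorphic to $P_{2^{k-1}-1}$ sharing the two endpoints, and try to color the two arcs with color sets that overlap only in a controlled way (reusing colors $1,\dots,k$ with a "$k$-appended to one arc" trick analogous to Claim~\ref{clm:unionoptimal}), then patch the two shared endpoints by hand. In either route, the main obstacle is the same: the two boundary vertices $u_1$ and $u_n$ each receive a code that is a union of \emph{two} edge-labels rather than one, so the clean bijection "vertex code $\leftrightarrow$ some incident edge label" used throughout Theorem~\ref{theorem:paths} breaks precisely at these two vertices, and one must check by hand that the (at most four) exceptional codes $\{1\}$, $\{k\}$, $\{1\}\cup C$, $\{k\}\cup C$ avoid all $n-2$ interior codes. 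I expect the bulk of the write-up to be exactly this finite case analysis for the two boundary vertices, together with verifying that for $k \geq 4$ there is enough freedom in the path coloring to keep $\{1,k\}$ off of $u_{n-1}$; the cases $k = 2, 3$ (i.e. $C_3$ and $C_7$) are genuinely excluded, consistent with Lemma~\ref{lem:c3}, which is a useful sanity check that the argument cannot be pushed below $k = 4$.
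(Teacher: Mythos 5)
There is a genuine gap, and it sits exactly at the point you flag as "delicate". Your main route is to take an optimal $k$-coloring of $P_{2^k-1}$ and close it into a cycle, after first re-engineering the path coloring so that the code $\{1,k\}$ is "free" for $u_1$ (or $u_n$). But for $n=2^k-1$ there is \emph{no} slack at all: the path has exactly $2^k-1$ vertices and there are exactly $2^k-1$ non-empty subsets of $\{1,\ldots,k\}$, so any union vertex-distinguishing $k$-coloring of $P_{2^k-1}$ assigns codes \emph{bijectively} onto all non-empty subsets. In particular $\{1,k\}$ is always the code of some vertex and can never be kept off the path. Now add the closing edge with any non-empty label $C$: if $1\in C$ and $C\neq\{1\}$ (resp.\ $k\in C$ and $C\neq\{k\}$, resp.\ $C$ arbitrary otherwise), at least one of the two endpoint codes becomes $\{1\}\cup C\neq\{1\}$ or $\{k\}\cup C\neq\{k\}$, which is already the code of some other vertex of the path, and a collision is unavoidable. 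Even the safest choices $C=\{1\}$ and $C=\{k\}$ produce the code $\{1,k\}$ on one endpoint, which collides with the (unique, necessarily existing) vertex already carrying $\{1,k\}$. This is precisely why Lemma~\ref{lem:cycles} excludes the case $n=2^k-1$ and why the paper handles it by an entirely different construction.

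The paper's actual proof is an induction on $k$ that never touches paths: the base case is an explicit $4$-coloring of $C_{15}$, and the step builds $C_{2^{k+1}-1}$ from two copies of $C_{2^k-1}$ (one carrying the inductive coloring $\alpha_k$, the other carrying $\alpha_k$ with color $k+1$ added to every edge) joined through one new vertex $v$, with a small local repair near the junction; the invariants carried through the induction are $\idC{\alpha_k}{u_1}=\{1\}$ and $1\in\alpha_k(u_2,u_3)$. Your "alternative route" (splitting the cycle into two arcs and reusing colors with a $(k+1)$-appended trick, as in Claim~\ref{clm:unionoptimal}) is much closer in spirit to this and is the direction worth developing; the primary path-closing route cannot be salvaged because the obstruction is a counting identity, not a lack of cleverness in choosing the path coloring.
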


\begin{proof}
We denote the vertices of $C_n$ by $u_1$,...,$u_n$.
We prove by induction on $k\geq 4$ that  there is a union vertex-distinguishing coloring $\alpha_k$ of $C_{2^k-1}$ with $k$ colors such that $\idC{\alpha_k}{u_1}=\{1\}$ and $1\in \alpha_k(u_2,u_3)$. 

The base case is $k=4$ (giving us $C_{15}$), that can be optimally colored as shown on Figure \ref{fig:C15}.

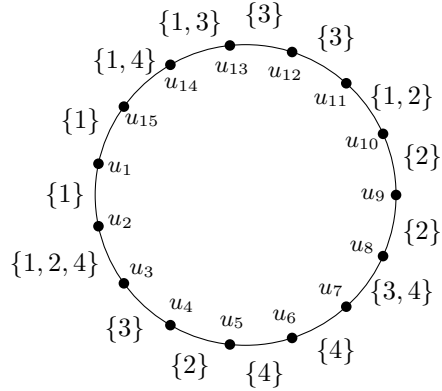
\begin{figure}[!h]
  \centering
  \begin{tikzpicture}
    \foreach \I in {1,...,15}
    {\node[noeud] at (24*\I-24*8:2) {};
      \draw (24*\I-24*9:1.7) node {\small $u_{\I}$};
    }
    \draw (0,0) circle (2);

	\draw (12:2.4) node {$\{2\}$};
	\draw (32:2.45) node {$\{1,2\}$};
	\draw (60:2.4) node {$\{3\}$};
	\draw (84:2.4) node {$\{3\}$};
	\draw (108:2.4) node {$\{1,3\}$};
	\draw (132:2.4) node {$\{1,4\}$};
	\draw (156:2.4) node {$\{1\}$};
	\draw (180:2.4) node {$\{1\}$};
	\draw (200:2.7) node {$\{1,2,4\}$};
	\draw (228:2.4) node {$\{3\}$};
	\draw (252:2.4) node {$\{2\}$};
	\draw (276:2.4) node {$\{4\}$};
	\draw (300:2.4) node {$\{4\}$};
	\draw (328:2.45) node {$\{3,4\}$};
	\draw (348:2.4) node {$\{2\}$};
\end{tikzpicture}
\caption{A vertex-distinguishing coloring of $C_{15}$.}
\label{fig:C15}
\end{figure}

Let $k\geq 4$ and $n=2^k-1$. Assume there exists a union vertex-distinguishing $k$-coloring $\alpha_k$ for $C_{n}$ that satisfies $\idC{\alpha_k}{u_1}=\{1\}$ and $1\in \alpha_k(u_2,u_3)$. We will construct a union vertex-distinguishing $(k+1)$-coloring $\beta$ of $C_{2n+1}$ using $\alpha_k$.

First, we create $C'_n$, a copy of $C_n$: every vertex $u_i$ in $C_n$ has a copy $u'_i$ in $C'_n$. We define the coloring $\alpha'_k$ on $C'_n$ by $\alpha'_k(u'_i,u'_{i+1})=\alpha_k(u_i,u_{i+1}) \cup \{k+1\}$ for all $1 \leq i \leq n-1$ and $\alpha'_k(u'_1,u'_{n})=\alpha_k(u_1,u_{n}) \cup \{k+1\}$, meaning that $\alpha'_k$ is a union vertex-distinguishing coloring for $C'_n$. Thus, we have: $$\bigcup_{i=1}^n (\idC{\alpha_k}{u_i} \cup \idC{\alpha'_k}{u'_i}) = \mathcal{P}^*(\{1,\ldots,k+1\}) \setminus \{k+1\}$$ (where $\mathcal{P}^*(S)$ denotes the powerset of $S$ short of the empty set).

We now create the following graph $G$, isomorphic to $C_{2n+1}$:
\begin{itemize}
\item $V(G)=V(C_n) \cup V(C'_n) \cup \{v\}$
\item $E(G)=(E(C_n) \setminus \{(u_1,u_{n})\}) \cup (E(C'_n) \setminus \{(u'_1,u'_{n})\}) \cup \{(u_1,v),(u'_1,v),(u_{n},u'_{n})\}$
\end{itemize}

We define the following coloring $\beta$ on $G$, which uses $k+1$ colors:
\begin{itemize}
\item $\beta(u_i,u_{i+1}) = \alpha_k(u_i,u_{i+1})$ for $0 \leq i \leq n-1$,
\item $\beta(u'_i,u'_{i+1}) = \alpha'_k(u'_i,u'_{i+1})$ for $0 \leq i \leq n-1$,
\item $\beta(u'_{1},u'_2) = \{k+1\}$
\item $\beta(u_1,v) = \{1\}$
\item $\beta(u'_1,v) = \{k+1\}$
\item $\beta(u_{n},u'_{n}) = \{1\} \cup \alpha_k(u_{n-1},u_{n})$
\end{itemize}

This construction is shown on Figure \ref{fig:cycles}.

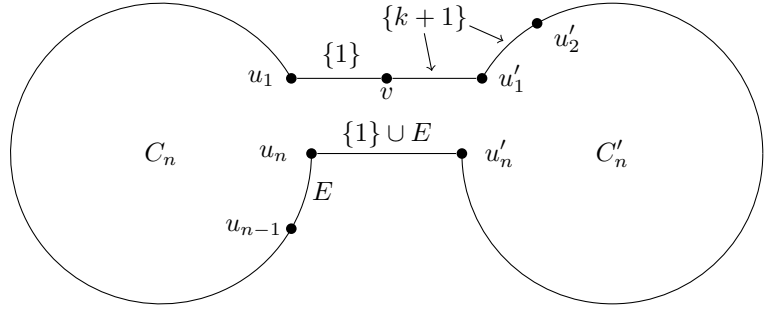
\begin{figure}[!h]
\centering
\begin{tikzpicture}
	\draw (1.732,1) arc (30:360:2);
	\draw (4,0) arc (180:360:2);
	\draw (8,0) arc (0:150:2);
        \node[noeud](un) at (2,0) {};
        \node[noeud](vn) at (4,0) {};
        \node[noeud](u1) at (1.732,1) {};
        \node[noeud](up1) at (4.268,1) {};
        \node[noeud](v) at (3,1) {};
        \node[noeud] at (5,1.732) {};
        \node[noeud] at (1.732,-1) {};

	\draw (u1)--(v)--(up1);
	\draw (un) -- (vn);
	
	\draw (0,0) node {$C_n$};
	\draw (6,0) node {$C'_n$};
	\draw (3,0.8) node {$v$};
	\draw (1.5,0) node {$u_{n}$};
	\draw (4.5,0) node {$u'_{n}$};
	\draw (1.332,1) node {$u_{1}$};
	\draw (4.668,1) node {$u'_1$};
	\draw (5.4,1.532) node {$u'_{2}$};
	\draw (1.232,-1) node {$u_{n-1}$};
	
	\draw (3,0.25) node {$\{1\} \cup E$};
	\draw (2.15,-0.5) node {$E$};
	\draw (2.4,1.3) node {$\{ 1 \}$};
	\draw (3.5,1.8) node {$\{ k+1 \}$};
	\draw [->] (3.5,1.6) -- (3.6,1.1);
	\draw [->] (4.1,1.7) -- (4.55,1.5);
\end{tikzpicture}
\caption{The construction of $G$ and $\beta$ from $C_n$, $C'_n$, $\alpha$ and $\alpha'$.}
\label{fig:cycles}
\end{figure}

We now prove that $\beta$ is union vertex-distinguishing. First, we can see that every vertex $u_i$ (resp. $u'_i$)  satisfies $\idC{\beta}{u_i}=\idC{\alpha_k}{u_i}$ (resp. $\idC{\beta}{u'_i}=\idC{\alpha'_k}{u'_i}$), except  $u'_1$. Indeed, this is clear for all the vertices but $u_1$, $u_n$, $u'_2$, $u'_n$ since the colors of their incident edges do not change. Since $\alpha_k(u_1,u_n)=\{1\}$ and $\alpha'_k(u'_1,u'_n)=\{1,k+1\}$ it is true for $u_1$, $u_n$ and $u'_n$. Finally, $1$ has been removed from the edge $(u'_1,u'_2)$ but by induction hypothesis, $1\in \alpha'_k(u'_2,u'_3)$ hence we also have $\idC{\beta}{u'_2}=\idC{\alpha'_k}{u'_2}$. Thus all these vertices are pairwise distinguished.

For the two last vertices, $\idC{\beta}{u'_1} = \{k+1\}$ does not appear in $\bigcup_{i=1}^n (\idC{\alpha_k}{u_i} \cup \idC{\alpha'_k}{u'_i})$ and $\idC{\beta}{v} = \{1,k+1\} = \idC{\alpha'}{u'_1}$. So these two vertices are distinguished from the other.

In conclusion, $\beta$ is a union vertex-distinguishing coloring of $C_{2^{k+1}-1}$ using $k+1$ colors. Furthermore, choosing for the first vertices $u_1$, $u_2$, $u_3$, $\ldots$ we have $\idC{\beta}{u_1}=\{1\}$ and still $1\in \beta(u_2,u_3)$. Hence the extra condition of the induction is still satisfied, which completes the proof.\end{proof}

\subsection{Complete Binary Trees}\label{sec:binary}

In this section we show that complete binary trees can be optimally colored. In what follows, all the complete binary trees will be rooted in their unique vertex of degree $2$. Recall that for such a tree $T$, the {\em height} of $T$ is the length (in terms of number of edges) of a path from the root to a leaf. Given a positive integer $h$, we will denote by $T_h$ and $r_h$, respectively, the complete binary tree of height $h$ and its root. Hence $T_{h+1}$ can be inductively built from two copies of $T_h$, say $T_h$ and $T'_h$, as follows:

\begin{eqnarray}
  V(T_{h+1})&=&V(T_h)\cup V(T'_h)\cup \{r_{h+1}\} \label{eq:thv}\\
  E(T_{h+1})&=&E(T_h)\cup E(T'_h)\cup \{(r_h,r_{h+1}),(r'_h,r_{h+1})\}\label{eq:the}
\end{eqnarray}

\begin{theorem}
\label{theorem:cbtrees}
For all $h\geq 1$, the complete binary tree $T_h$ can be optimally colored.
\end{theorem}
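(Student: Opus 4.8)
The proof will be by induction on the height $h$, using the recursive structure of $T_{h+1}$ given by (\ref{eq:thv})--(\ref{eq:the}). The key arithmetic fact is that $T_h$ has $n_h := 2^{h+1}-1$ vertices, so $\logV{V(T_h)} = h+1$; hence "optimally colored" means colored with exactly $h+1$ colors. Note that $T_{h+1}$ is built from two copies of $T_h$, each of which (by induction) admits an optimal $(h+1)$-coloring, plus a new root $r_{h+1}$ joined to the two old roots; so morally we have two $(h+1)$-graphs glued at a new vertex, and we want an $(h+2)$-coloring of the union. This is reminiscent of Claim~\ref{clm:unionoptimal}, but the two copies are no longer disjoint — they share the path through $r_{h+1}$ — so the bookkeeping at $r_h$, $r'_h$, $r_{h+1}$ must be done by hand.

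\textbf{Strengthening the hypothesis.} As in Theorems~\ref{theorem:paths} and~\ref{theorem:cbtrees}'s siblings, a bare induction will not go through; I will prove a stronger statement carrying extra structural conditions on the coloring. The natural candidates are: (a) the root $r_h$ receives a prescribed small code, say $\idC{}{r_h}=\{1\}$, or some other "reserved" code that we control; (b) one particular edge incident to $r_h$ contains a particular color (as was done for cycles, where the extra condition was $1\in\alpha_k(u_2,u_3)$), so that when we later re-route or re-color the edge $(r_h,r_{h+1})$ the code of $r_h$'s neighbor is unaffected; and possibly (c) some code, e.g. $\{1,\ldots,h+1\}$ or a near-full set, is \emph{not} used inside $T_h$, leaving it free to be assigned to $r_{h+1}$ or to the parent. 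The precise list of conditions is exactly the part that has to be tuned so that the inductive step closes; I expect to need conditions on both copies simultaneously and to use color $h+2$ (the genuinely new color) only on edges of one copy, à la the disjoint-union trick, reserving a handful of codes using $\{h+2\}$-free subsets on the other copy.

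\textbf{The inductive step.} Given optimal colorings $\alpha$ of $T_h$ and $\alpha'$ of $T'_h$ satisfying the strengthened hypothesis, both on color set $\{1,\ldots,h+1\}$: keep $\alpha$ on the first copy verbatim; on the second copy use $\alpha'(e)\cup\{h+2\}$ on every edge, so every vertex of $T'_h$ gets $h+2$ in its code while no vertex of $T_h$ does — this instantly separates the two copies, exactly as in Claim~\ref{clm:unionoptimal}. It remains to color the two new edges $(r_h,r_{h+1})$ and $(r'_h,r_{h+1})$ and to check that $r_h$, $r'_h$, $r_{h+1}$ get pairwise distinct codes, distinct from everything else. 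Here I will exploit conditions (a)/(b): choose $\alpha((r_h,r_{h+1}))$ so that it does not disturb the codes of $r_h$'s children (because the relevant colors already appear on a child edge by (b)) and so that $\idC{}{r_h}$ becomes some controlled value; similarly pick $\alpha'((r'_h,r_{h+1}))$ (with or without $h+2$) so that $r'_h$ lands on its controlled value; and finally assign to the two edges at $r_{h+1}$ a combination realizing a fresh code for $r_{h+1}$ — a good candidate is $\idC{}{r_{h+1}}=\{1,\ldots,h+2\}$ or $\{1,h+2\}$ or similar, chosen to re-establish condition (a) for the next level. One then re-verifies conditions (a), (b), (c) for $T_{h+1}$ so the induction propagates.

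\textbf{Main obstacle.} The real work, and the likely sticking point, is \emph{pinning down the exact strengthened invariant}: it must be strong enough that the three new vertices can always be given distinct, globally-fresh codes, yet weak enough to be re-established at height $h+1$ from two height-$h$ copies. In particular the counting has to be tight — with $n_{h+1}=2^{h+2}-1$ vertices we are using every nonempty subset of $\{1,\ldots,h+2\}$ except one code, so there is essentially no slack, and the codes "spent" on $r_h,r'_h,r_{h+1}$ and on the reserved sets must be accounted for exactly. I would expect to need a small base case check at $h=1$ (and perhaps $h=2$) by explicit coloring, plus a short case analysis in the step depending on whether the reserved set involves color $h+1$. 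Once the invariant is right, each verification is a routine "this color appears / does not appear in this code" check of the kind already carried out in Lemmas~\ref{lem:stars} and~\ref{lem:cycles2k-1}.
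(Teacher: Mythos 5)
Your overall architecture (induction on $h$, a strengthened invariant, color $h+2$ added to every edge of one copy so that the two copies separate as in Claim~\ref{clm:unionoptimal}) matches the paper's proof. But the part you explicitly defer --- ``pinning down the exact strengthened invariant'' and the endgame at $r_{h+1}$ --- is the entire content of the argument, and your sketch of that endgame cannot work as written. The counting is not merely tight, it is overfull at $r_{h+1}$: the $2^{h+1}-1$ vertices of $T_h$ necessarily exhaust \emph{all} nonempty subsets of $\{1,\ldots,h+1\}$ (so your candidate condition (c), reserving an unused code inside $T_h$, is impossible), and the $2^{h+1}-1$ vertices of $T'_h$ then exhaust all subsets of $\{1,\ldots,h+2\}$ containing $h+2$ except $\{h+2\}$ itself. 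Hence the only code left over for $r_{h+1}$ is $\{h+2\}$, and $r_{h+1}$ cannot receive it: that would force both edges at $r_{h+1}$ to be colored $\{h+2\}$, which puts $h+2$ into $\idC{}{r_h}$ and makes $r_h$ collide with its own copy $r'_h$. Your proposed ``fresh'' codes $\{1,\ldots,h+2\}$ and $\{1,h+2\}$ for $r_{h+1}$ are likewise already taken by vertices of $T'_h$.

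So a repair step elsewhere in the tree is unavoidable, and this is exactly what the paper's invariant is engineered for: it requires $\idC{\alpha_h}{r_h}=\{h,h+1\}$ and that every edge of the sub-subtree $T'_{h-1}$ contains color $h+1$. One then gives $r_{h+1}$ the (already used) code $\{h+1,h+2\}$, locates the unique vertex $u$ of $T'_h$ carrying that code (which the invariant forces to lie inside the copy of $T'_{h-1}$), and deletes $h+1$ from every edge incident to $u$, sending $\idC{}{u}$ to the one genuinely free code $\{h+2\}$; the edge condition guarantees that each neighbor of $u$ still sees $h+1$ on another incident edge, so no other code changes. Without an invariant of this kind --- one that guarantees the forced collision can be repaired locally and harmlessly --- the induction does not close, so as it stands your proposal has a genuine gap rather than a routine verification left to do.
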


\begin{proof}
We proceed by induction on $h$. Given $h\geq 1$, our induction hypothesis claims there exists a union vertex-distinguishing $(h+1)$-coloring $\alpha_h$ of $T_h$ such that:
\begin{align}
  &\idC{\alpha_h}{r_h}=\{h,h+1\} \label{hyp:1} \\
  &\forall e\in E(T'_{h-1}),\; h+1\in \alpha_h(e). \label{hyp:2}
\end{align}

Note that such an $(h+1)$-coloring is an optimal coloring, since the complete binary tree of height $h$ has $2^{h+1}-1$ vertices.

This property holds for $h=1$, as shown by the coloring depicted on Figure~\ref{fig:P3} (with $r_h=u_2$).

Now assume the property holds for some $h\geq 1$, and consider $T_{h+1}$ as defined in \eqref{eq:thv} and \eqref{eq:the}. By induction hypothesis, there exists an $(h+1)$-coloring $\alpha_h$ of $T_h$ (and also of $T'_h$) satisfying Conditions~(\ref{hyp:1}) and (\ref{hyp:2}). We now define an $(h+2)$-coloring $\alpha_{h+1}$ of $T_{h+1}$ as follows:
\begin{flalign*}
&\forall e\in E(T_h),\; \alpha_{h+1}(e)=\alpha_h(e), \\
&\forall e\in E(T'_h),\; \alpha_{h+1}(e)=\alpha_h(e)\cup\{h+2\},\\
&\alpha_{h+1}(r_h,r_{h+1})=\{h+1\},\\
&\alpha_{h+1}(r'_h,r_{h+1})=\{h+1,h+2\}.
\end{flalign*}

Figure~\ref{fig:ABC} illustrates this coloring.
 \begin{figure}[!h]
 \begin{center}
 \includegraphics[scale=.3]{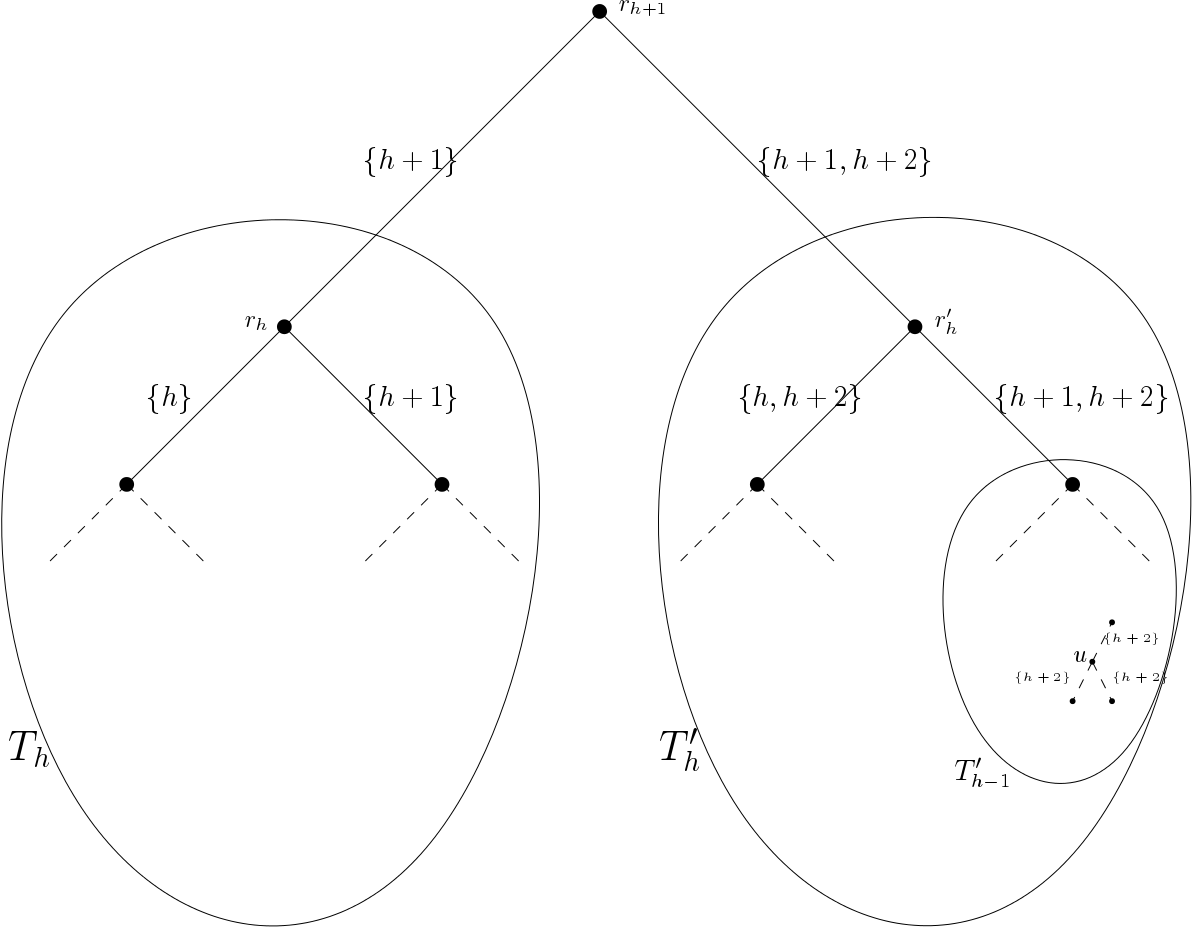}
 \caption{Optimal coloring of $T_{h+1}$ \label{fig:ABC}}
 \end{center}
 \end{figure}

With the above definition, we have:
\begin{flalign*}
&\forall v\in V(T_h)\setminus\{r_h\},\; \idC{\alpha_{h+1}}{v}=\idC{\alpha_{h}}{v},\\
&\forall v\in V(T'_h)\setminus\{r'_h\},\; \idC{\alpha_{h+1}}{v}=\idC{\alpha_{h}}{v}\cup\{h+2\},\\
&\idC{\alpha_{h+1}}{r_{h+1}}=\{h+1,h+2\}.\\
\end{flalign*}
Moreover, since $r_h$ and $r'_h$ both satisfy Condition~(\ref{hyp:1}), we also have that $\idC{\alpha_{h+1}}{r_h}=\idC{\alpha_{h}}{r_h}$ and $\idC{\alpha_{h+1}}{r'_h}=\idC{\alpha_{h}}{r_h}\cup\{h+2\}$. Hence, by induction hypothesis, all the vertices of $T_h$ are identified by $\alpha_{h+1}$ with all non-empty subsets of $\{1,\ldots,h+1\}$, and all the vertices of $T'_h$ are identified with all subsets of $\{1,\ldots,h+2\}$ containing $h+2$. In particular, there exists a unique vertex $u$ of $T'_h$ such that $\idC{\alpha_{h+1}}{u}=\{h+1,h+2\}$. Since $\idC{\alpha_{h}}{u}=\{h+1\}$ and since $T'_h$ satisfies (\ref{hyp:2}), it implies that $u\in T'_{h-1}$. Remark that $u$ cannot be adjacent to a leaf, otherwise they would not be identified by $\alpha_{h}$ as having the same code $\{h+1\}$.\\
Up to now, the coloring $\alpha_{h+1}$ is not vertex-distinguishing since $\idC{\alpha_{h+1}}{u}=\idC{\alpha_{h+1}}{r_{h+1}}=\{h+1,h+2\}$ and since $\{h+2\}$ has not been assigned yet to a vertex. For every edge $e$ incident to $u$, remove the value $h+1$ from the set $\alpha_{h+1}(e)$. Hence $\idC{\alpha_{h+1}}{u}=\{h+2\}$. In addition, since $u$ is in $T'_{h-1}$ and if it is not the root of $T'_{h-1}$, all the vertices adjacent to $u$ have their code unchanged since they are of degree $3$ and, by (\ref{hyp:2}), are incident to at least an edge with the color $h+1$. Otherwise, if $u$ is the root of $T'_{h-1}$, then $r'_h$ has still $h+1$ in its code since $\alpha_{h+1}(r'_h,r_{h+1})=\{h+1,h+2\}$ (see Figure~\ref{fig:ABC}). Therefore, the coloring $\alpha_{h+1}$ is vertex-distinguishing and Conditions (\ref{hyp:1}) and (\ref{hyp:2}) are satisfied for $T_{h+1}$, which completes the proof.
\end{proof}

\section{Open questions}

\begin{question}
\label{question:upperBound}
Do we have, for any graph $G$, $\kiu(G) \leq \logV{V(G)} +1$ ?
\end{question}

In order to prove such a result, Lemma~\ref{lem:subgraphImpliesGraph} might be really helpful. Indeed, if one can find an edge-subgraph of $G$ for which there exists a coloring using the optimal number of colors, then one can prove that the graph itself can be colored using at most the optimal number of colors plus one. In particular, showing that one of the two following properties holds would ensure that any graph can be colored using the optimal number of colors plus one:

\begin{itemize}
\item Any tree can be optimally colored.
\item Any forest of stars subdivided at most one time can be optimally colored.
\end{itemize}
More generally, it would also be interesting to understand which properties on the graph ensure that it can be optimally colored. The sparsity of the graph may be an interesting parameter to look at.

In addition, it may be interesting to consider several usual variations of the distinguishing problem such as the variant where the coloring is proper or where only adjacent vertices must be distinguished.

\begin{center}
	\section*{References}
\end{center}

\end{document}